\documentclass[11pt]{amsart}
\title{Spectral inequalities for Jacobi operators and related sharp Lieb--Thirring inequalities on the continuum}
\author{Lukas Schimmer}
\address{Imperial College London, 180 Queen's Gate, London SW7 2AZ, United Kingdom}
\email{l.schimmer11@imperial.ac.uk}

\usepackage{fullpage}
\usepackage[latin1]{inputenc}
\usepackage[english]{babel}
\usepackage{latexsym}
\usepackage{amsmath}
\usepackage{amssymb}
\usepackage{bbm}
\usepackage{enumitem}
\usepackage{array}
\usepackage{ifthen}
\usepackage{graphicx}
\usepackage{hyperref}   


\newcommand{\norm}[2][]{\left\|#2\right\|_{#1}}   
\newcommand{\sklp}[2][]{\left\langle#2\right\rangle_{#1}} 

\newcommand{\N}{\mathbb{N}}
\newcommand{\Z}{\mathbb{Z}}

\newcommand{\R}{\mathbb{R}}
\newcommand{\C}{\mathbb{C}}

\newcommand{\id}{\mathbb{I}}
\newcommand{\Oo}{\mathcal{O}}

\newcommand{\e}{\mathrm{e}}

\newcommand{\dom}{\mathrm{dom}} 
\newcommand{\tr}{\mathrm{tr}} 
\newcommand{\dd}{\, \mathrm{d}}
\newcommand{\cl}{\mathrm{cl}} 
\newcommand{\B}{\mathrm{B}}

\newcounter{satz}[section]
\renewcommand{\thesatz}{\arabic{section}.\arabic{satz}}


\newenvironment{theorem}[1][]
{\bigskip\noindent\refstepcounter{satz}\textsc{Theorem~\thesatz}\ifthenelse{\equal{#1}{}}{.}{~(\textit{#1}):} \, \it}
{\bigskip}

\newenvironment{lemma}[1][]
{\bigskip\noindent\refstepcounter{satz}\textsc{Lemma~\thesatz}\ifthenelse{\equal{#1}{}}{.}{~(\textit{#1}):} \, \it}
{\bigskip}

\newenvironment{proposition}[1][]
{\bigskip\noindent\refstepcounter{satz}\textsc{Proposition~\thesatz}\ifthenelse{\equal{#1}{}}{.}{~(\textit{#1}):} \, \it}
{\bigskip}

   
\begin{document}

\begin{abstract}
In this paper we approximate a Schr{\"o}dinger operator on $L^2(\R)$ by Jacobi operators on $\ell^2(\Z)$ to provide new proofs of sharp Lieb--Thirring inequalities for the powers $\gamma=\frac12$ and $\gamma=\frac32$. To this end we first investigate spectral inequalities for Jacobi operators. Using the commutation method we present a new, direct proof of a sharp inequality corresponding to a Lieb--Thirring inequality for the power $\frac32$ on $\ell^2(\Z)$.  
We also introduce inequalities for higher powers of the eigenvalues as well as for matrix-valued potentials and compare our results to previously established bounds. 
\end{abstract}

\maketitle

Consider the Schr{\"o}dinger operator $H=-\Delta+V$ on  $L^2(\R^d)$ with real-valued potential $V:\R^d\to\R$ satisfying $\lim_{|x|\to\infty}V(x)=0$. Denote by $\mu_1\le\mu_2\le\dots\le0$ the (not necessarily finite) sequence of  negative eigenvalues of $H$. In \cite{LiebThirring1976} Lieb and Thirring showed that, if $\gamma>\max(0,1-\frac d2)$, there exists a constant $L_{\gamma,d}$ independent of $V$ such that
\begin{align}
\tr\left(-\Delta+V\right)^{\gamma}_-=\sum_{j}|\mu_j|^\gamma\le L_{\gamma,d}\int_{\R^d}V(x)_-^{\gamma+\frac d2}\dd x\,.
\label{eq:lieb} 
\end{align}
Here and below, $a_\pm=(|a|\pm a)/2$ denotes the positive/negative part of a variable, a function or a self-adjoint operator. 
Subsequently \eqref{eq:lieb} was extended to the case $\gamma=0$, $d\ge3$ (\cite{Cwikel1977}, \cite{Lieb1976}, \cite{Rozenblum1976}) and the remaining critical case $\gamma=\frac12$, $d=1$ was proven to hold by Weidl in \cite{Weidl1996}.

It is of interest to identify the value of the sharp constant $L_{\gamma,d}$. The semi-classical Lieb--Thirring constant $L_{\gamma,d}^{\cl}$  is defined as
\begin{align*}
L_{\gamma,d}^{\cl}=(4\pi)^{-\frac d2}\frac{\Gamma(\gamma+1)}{\Gamma(\gamma+1+\frac d2)}
\end{align*}
and a Weyl-type asymptotic result for the left-hand-side of \eqref{eq:lieb} implies that $L_{\gamma,d}^{\cl}\le L_{\gamma,d}$. In \cite{LiebThirring1976}, Lieb and Thirring were able to show that  $ L_{\gamma,d}=L_{\gamma,d}^{\cl}$ in the case $\gamma\ge\frac32$, $d=1$. Their proof is based on the Buslaev--Faddeev--Zakharov trace formulae  for Schr{\"o}dinger operators \cite{Buslaev1960,Zakharov1971}. The Aizenman--Lieb principle \cite{Aizenman1978} proved that for fixed dimension $d$ the ratio $L_{\gamma,d}/L_{\gamma,d}^{\cl}$ is a monotone, non-increasing function of $\gamma$.  Subsequently $L_{\gamma,d}=L_{\gamma,d}^{\cl}$ was also shown to hold in the case $\gamma\ge\frac32$, $d\ge1$ by Laptev and Weidl \cite{Laptev2000}. The authors first established Buslaev--Faddeev--Zakharov trace formulae in the more general case of $H$ being defined on $L^2(\R,\C^m)$, where the potential is a Hermitian matrix-valued function $V:\R\to\C^{m\times m}$. This yielded sharp Lieb--Thirring inequalities in one dimension for matrix-valued potentials and  $\gamma\ge\frac32$. An induction argument making use of the matrix nature of the results lifted the inequalities to higher dimensions.    

For $\gamma=\frac12,\,d=1$ Hundertmark, Lieb and Thomas \cite{Hundertmark1998} proved that the best constant is given by $L_{1/2,1}=2L_{1/2,1}^{\cl}=\frac12$. The inequality is then sharp for delta potentials. 
In \cite{Eden1991} it was shown that for $1\le \gamma<\frac32$ the sharp constant can be bounded by $L_{\gamma,1}\le \frac{\pi}{\sqrt{3}}L_{\gamma,1}^{\cl}$, which was generalised to higher dimensions in \cite{Dolbeault2008}.

In our paper we provide a new proof for sharp Lieb--Thirring bounds in one dimension for $\gamma=\frac12$ and $\gamma=\frac32$ using an approximation of the Schr{\"o}dinger operator by Jacobi operators on the lattice. 
The self-adjoint Jacobi operator $W$ on $\ell^2(\Z)$  is defined as
\begin{align}
(Wu)(n)=a(n-1)u(n-1)+a(n)u(n+1)+b(n)u(n),\,n\in\Z\,.
\label{eq:defH} 
\end{align}
The sequence $a:\Z\to\R^-$ is assumed to be negative with $\lim_{|n|\to\infty}a(n)=-1$. Motivated by the representation of $W$ as a doubly infinite matrix, the numbers $a(n)$ are referred to as the off-diagonal entries of $W$. The potential $b:\Z\to\R$ is supposed to be real-valued with $\lim_{|n|\to\infty}b(n)=0$. 
If $a(n)=-1$ for all $n\in\Z$ we call $W$ a discrete Schr{\"o}dinger operator. Note that when aiming to obtain sharp Lieb--Thirring inequalities on the continuum, it is not sufficient to only consider discrete Schr{\"o}dinger operators on $\ell^2(\Z)$ to approximate Schr{\"o}dinger operators on $L^2(\R)$. As we will see later, it is essential to derive spectral inequalities for general off-diagonal entries $a(n)$ which are not constant $-1$.  The only exception is the case $\gamma=\frac12$.

Let the numbers 
$|\lambda_1|\ge|\lambda_2|\ge\dots\ge2$ denote the (not necessarily finite) sequence of eigenvalues of $W$ outside the essential spectrum $[-2,2]$, arranged by decreasing modulus.
In 2002 Hundertmark and Simon \cite{Hundertmark2002} proved that 
\begin{align}
\sum_{j}\big(|\lambda_j|^2-4\big)^{\frac12}
\le \sum_{n\in\Z}|b(n)|+4\sum_{n\in\Z}|a(n)+1|
\label{eq:hsmain} 
\end{align}
provided that the right-hand-side is finite. The authors first established the inequality in the special case of discrete Schr{\"o}dinger operators where, in complete analogy to the case of a Schr{\"o}dinger operator on the continuum, it was proven to be sharp for delta potentials. Hundertmark and Simon then used a perturbation argument to extend the result to more general off-diagonal entries $a(n)$. The constants in front of the two sums on the right-hand-side (1 and 4) were both found to be optimal. We will show that \eqref{eq:hsmain} with constant $a\equiv-1$ is sufficient to prove the sharp Lieb--Thirring inequality on $L^2(\R)$ for $\gamma=\frac12$.  To give further merit to the fact that the result of Hundertmark and Simon is also asymptotically optimal for general off-diagonal entries, we will show that restricting \eqref{eq:hsmain} to vanishing potentials $b\equiv0$ also yields the sharp Lieb--Thirring constant $L_{1/2,1}$ on $L^2(\R)$. 

To obtain analogous inequalities to \eqref{eq:hsmain} for higher powers of the eigenvalues, Hundertmark and Simon applied the Aizenman-Lieb principle \cite{Aizenman1978}
which requires the inequality to be written in terms of $\mathrm{dist}(\lambda_j,[-2,2])$. This can be done using 
$|\lambda_j|^2-4\ge4(|\lambda_j|-2)$ and for powers $\gamma\ge\frac12$ 
it yields the Lieb--Thirring inequalities
\begin{align}
\sum_{j}\big(|\lambda_j|-2\big)^{\gamma}
\le c_\gamma\left(\sum_{n\in\Z}|b(n)|^{\gamma+\frac12}+4\sum_{n\in\Z}|a(n)+1|^{\gamma+\frac12}\right)
\label{eq:hs1} 
\end{align}
with the constants 
\begin{align*}
c_\gamma=3^{\gamma-\frac12}\frac12 \frac{\Gamma(\gamma+1)\Gamma(2)}{\Gamma(\gamma+\frac32)\Gamma(\frac32)}\,.
\end{align*}
In the special case where $a(n)=-1$ for all $n\in\Z$, the constant $c_\gamma$ can be replaced by the smaller value $d_\gamma=3^{\frac12-\gamma}c_\gamma$. This stems from the fact that Hundertmark and Simon again first established the inequalities for discrete Schr{\"o}dinger operators and then used a perturbation argument to extend them to more general off-diagonal entries $a(n)$. As shown later in \cite{Sahovic2010}, the constants $c_\gamma$ and $d_\gamma$ can be improved by a factor $\frac{\pi}{2\sqrt{3}}$ if $\gamma\ge1$. 
If we let $\eta>0$ and consider $W$ with a potential $\eta b$ then the ratio of the left-hand-side and the right-hand-side of \eqref{eq:hsmain} converges to 1 for $\eta\to\infty$. In \eqref{eq:hs1}, however, the left-hand-side grows like $\eta^\gamma$, whereas the right-hand-side behaves like $\eta^{\gamma+\frac12}$, suggesting that the bounds are not optimal for large eigenvalues.  

In \cite{Hundertmark2002} Hundertmark and Simon used the inequality $(\lambda_j)^2-4\ge(|\lambda_j|-2)^2$ and the Aizenman--Lieb principle to show that \eqref{eq:hsmain} also implies 
\begin{align}
\sum_{j}\big(|\lambda_j|-2\big)^{\gamma+\frac12}
\le 3^{\gamma-\frac12}\left(\sum_{n\in\Z}|b(n)|^{\gamma+\frac12}+4\sum_{n\in\Z}|a(n)+1|^{\gamma+\frac12}\right)
\label{eq:hs2} 
\end{align}
where $\gamma\ge\frac12$. If $a(n)=-1$ for all $n\in\Z$, the factor $3^{\gamma-\frac12}$ can be omitted.
Inequality \eqref{eq:hs2} has the desired property that for potentials $\eta b$ the ratio of the left-hand-side and the right-hand-side converges to 1 as $\eta\to\infty$. It is, however, not optimal for small coupling. In fact, for eigenvalues close to the essential spectrum, the left-hand-side of inequality \eqref{eq:hs1} is much larger than the left-hand-side of \eqref{eq:hs2}, while the corresponding right-hand-sides only differ by a multiplicative constant. Only \eqref{eq:hsmain} is optimal for both large and small coupling.  
Note that  in \cite{Hundertmark2002} the inequalities \eqref{eq:hs1} and \eqref{eq:hs2} were also generalised to matrix-valued potentials and hence, by applying the method developed by Laptev and Weidl \cite{Laptev2000}, also to $d$-dimensional Jacobi operators.   

To obtain optimal estimates for both small and large coupling, different types of inequalities have to be introduced.
In \cite{Killip2003} Killip and Simon showed that if we write the eigenvalues as $\lambda_j=-k_j-\frac{1}{k_j}$ with $|k_j|>1$, then 
\begin{align}
\sum_{j}\left( k_j^2-\frac{1}{k_j^2}-\log |k_j|^4\right)\le\sum_{n\in\Z}b(n)^2+2\sum_{n\in\Z} \left(a(n)^2-1-\log a(n)^2\right)\,.
\label{eq:killipsimon} 
\end{align}
Note the emergence of logarithmic terms in both sides of this inequality. The term $\log |k_j|^4$ is essential to guarantee good estimates for both small and large coupling, as we will discuss later. We will prove that the inequality is sharp for reflectionless potentials. For all known reflectionless potentials it holds that $a\not\equiv-1$, leading to the conclusion that in contrast to \eqref{eq:hsmain} the inequality is not necessarily sharp when restricted to discrete Schr{\"o}dinger operators. 
So far no connection has been made in the literature between \eqref{eq:hsmain} and \eqref{eq:killipsimon}. In our paper we will show that at least in the case of a discrete Schr{\"o}dinger operator, the latter can be obtained from the former using an adaptation of the Aizenman--Lieb principle. This is especially interesting as both inequalities are optimal for large coupling. In contrast, a similar argument in the case of a Schr{\"o}dinger operator on the continuum does not allow to deduce the optimal Lieb--Thirring inequality for $\gamma=\frac32$ from the case of $\gamma=\frac12$. 
Using the adapted Aizenman--Lieb principle, we will also prove inequalities for discrete Schr{\"o}dinger operators for any $\gamma\ge\frac12$. The results are better than both \eqref{eq:hs1} and \eqref{eq:hs2}. 

In our paper we will show that \eqref{eq:killipsimon} yields the sharp Lieb--Thirring constant $L_{3/2,1}$ on $L^2(\R)$. Unlike in the proof for $\gamma=\frac12$, it will not be sufficient to restrict ourselves to discrete Schr{\"o}dinger operators in the approximation of the Schr{\"o}dinger operator on the continuum. The terms depending on $b$ and the terms depending on $a$ in \eqref{eq:killipsimon} will play an equally important role. For this reason, the  inequalities for $\frac12<\gamma<\frac32$, which we only managed to prove for discrete Schr{\"o}dinger operators, will not yield the sharp Lieb--Thirring constants $L_{\gamma,1}$ on $L^2(\R)$.   
 
In their paper \cite{Killip2003}, Killip and Simon derived \eqref{eq:killipsimon} from an identity
obtained from Case's sum rules (first stated in \cite{Case1975}), for which they provided two different proofs.  The first proof made use of the perturbation determinant and the Jost function of $W$, while the second proof involved a continuous fraction expansion of the $m$-function.
Killip and Simon used the sum rules to not only show \eqref{eq:killipsimon} but also to prove a characterisation of Hilbert--Schmidt perturbations of the free Jacobi matrix and results related to the Szeg\H{o} condition. Using the same methods, \eqref{eq:killipsimon} was generalised to matrix-valued potentials $B$ in \cite{Simon2011}. 
Note that, for the case of a discrete Schr{\"o}dinger operator, inequality \eqref{eq:killipsimon}  can also be found in earlier work by Deift and Killip \cite{Deift1999}. In their paper, the inequality is obtained from a trace identity for $W$.

In our work, we aim to prove \eqref{eq:killipsimon} directly, without the use of Case's sum rules. To do so, we will apply the commutation method, which goes back to the idea of inserting eigenvalues into the spectrum of differential operators as discussed by Jacobi \cite{Jacobi1837}, Darboux \cite{Darboux1882} and Crum \cite{Crum1955}. For more details on this method and some of its applications we refer to \cite{Deift1978} and \cite{Gesztesy1993} as well as \cite{Gesztesy1996}, which specifically considers the case of Jacobi operators. 
The commutation method was used in \cite{Schmincke1978} where the Schmincke inequality was established for a Schr{\"o}dinger operator on $L^2(\R)$. The same method can also be applied to Schr{\"o}dinger operators with matrix-valued  potentials. In \cite{Benguria2000} Benguria and Loss presented an elementary proof of sharp Lieb--Thirring inequalities for Schr{\"o}dinger operators on the continuum with matrix-valued potentials based on the commutation method. This provided a new proof of the inequality that had previously been established using trace identities by Laptev and Weidl in \cite{Laptev2000}. Similar results were obtained on the half-line in \cite{Exner2013}. In our work we will also consider matrix-valued potentials. 

This paper is organised as follows. In Section \ref{sec:main} we will state our main results for scalar Jacobi operators as well as for the case of matrix-valued potentials. In the subsequent sections the proofs of these statements are given. In Section \ref{sec:sharp} it is shown that the obtained inequalities are sharp for an explicit choice of $W$ with reflectionless potential. In Section \ref{sec:higher} we will show that at least  in the case  $a\equiv-1$ inequality \eqref{eq:killipsimon} can be obtained from the main result of Hundertmark and Simon \eqref{eq:hsmain}. We will use the same method to derive similar inequalities for higher powers $\gamma>\frac12$ of the eigenvalues and the potential, restricting ourselves again to the case of a discrete Schr{\"o}dinger operator. These inequalities will also include logarithmic terms and we will show that they are more precise than \eqref{eq:hs1} and \eqref{eq:hs2}.
In Section \ref{sec:approx1} we will discuss the approximation of a Schr{\"o}dinger operator on the continuum by discrete Schr{\"o}dinger operators and we will use inequality \eqref{eq:hsmain} to provide a new proof for the sharp Lieb--Thirring constant $L_{1/2,1}$ on $L^2(\R)$. 
In Section \ref{sec:approx2} we will use a similar technique involving more general Jacobi operators and inequality \eqref{eq:killipsimon} to prove sharp Lieb--Thirring bounds for $\gamma\ge\frac32$. 

\section{Statement of the main results for Jacobi operators} \label{sec:main}

We assume that $a(n)$ and $b(n)$ are both uniformly bounded for $n\in\Z$ such that \eqref{eq:defH} defines a self-adjoint operator $W$ with domain $\ell^2(\Z)$. 

\begin{theorem}\label{th:main}
Assume $b\in\ell^2(\Z)$ and that $\sum_{n\in\Z} \big(a(n)^2-1-\log a(n)^2\big)$ is finite. Then it holds that
\begin{align}
\sum_{j}\left( k_j^2-\frac{1}{k_j^2}-\log |k_j|^4\right)\le\sum_{n\in\Z}b(n)^2+2\sum_{n\in\Z} \left(a(n)^2-1-\log a(n)^2\right)
\label{eq:final} 
\end{align}
and this inequality is sharp.
In particular, the discrete Schr{\"o}dinger operator with $a(n)=-1$ for all $n\in\Z$ and square-summable potential $b$ satisfies
\begin{align*}
\sum_{j}\left( k_j^2-\frac{1}{k_j^2}-\log |k_j|^4\right)\le\sum_{n\in\Z}b(n)^2\,.
\end{align*}

\end{theorem}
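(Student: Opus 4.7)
The plan is to use the commutation method in the Darboux--Jacobi--Crum style, peeling off the eigenvalues of $W$ outside $[-2,2]$ one at a time and showing that each commutation step reduces the right-hand side of \eqref{eq:final} by exactly the single-eigenvalue contribution $k^2 - k^{-2} - \log k^4$. After all bound states have been removed the residual Jacobi operator has no eigenvalues outside $[-2,2]$, and its right-hand side is manifestly non-negative since $x - 1 - \log x \ge 0$ for $x > 0$; this yields \eqref{eq:final}. Sharpness will follow because the one-step identity is an equality, so tracing through the iteration on reflectionless potentials, where the commutations preserve reflectionlessness and successively strip the eigenvalues to land on the free Jacobi matrix, leaves the residual right-hand side equal to zero.

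I first reduce to the case in which $a(n) + 1$ and $b(n)$ vanish off a finite window, so that only finitely many eigenvalues are present and all sums converge. Let $\lambda_1 = -k - k^{-1}$ with $k > 1$ be the eigenvalue of largest modulus; by the chirality symmetry $W \mapsto -UWU^*$ with $(Uu)(n) = (-1)^n u(n)$, which preserves both sides of \eqref{eq:final} and flips the spectrum, I may assume $\lambda_1 < -2$. Since $a(n) < 0$, a discrete Perron--Frobenius argument produces a strictly positive eigenfunction $\psi$ for $\lambda_1$, and under the finite-support assumption the Jost analysis yields $\psi(n+1)/\psi(n) \to k^{-1}$ as $n \to +\infty$ and $\to k$ as $n \to -\infty$. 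I then factor $W - \lambda_1 = A^*A$ with $(Au)(n) = \alpha(n) u(n+1) + \beta(n) u(n)$, where $\alpha(n)^2 = -a(n)\psi(n)/\psi(n+1)$, $\beta(n)^2 = -a(n)\psi(n+1)/\psi(n)$, and signs are chosen so that $\alpha(n)\beta(n) = a(n)$. The commuted operator $\widetilde W := AA^* + \lambda_1$ is again a Jacobi operator, with coefficients $\widetilde a(n) = \alpha(n)\beta(n+1)$ and $\widetilde b(n) = \lambda_1 + \alpha(n)^2 + \beta(n)^2$, and its spectrum is $\sigma(W) \setminus \{\lambda_1\}$.

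The heart of the proof is the one-step identity that the right-hand side of \eqref{eq:final} evaluated at $W$ minus the same quantity evaluated at $\widetilde W$ equals $k^2 - k^{-2} - \log k^4$. Setting $q(n) = \psi(n+1)/\psi(n)$, one computes $\widetilde a(n)^2 = a(n)a(n+1)q(n+1)/q(n)$, whence $\sum_n[\log a(n)^2 - \log \widetilde a(n)^2]$ telescopes to $2\log k$ using the asymptotics $q(\pm\infty) = k^{\mp 1}$. The eigenvalue equation gives $b(n) - \widetilde b(n) = \alpha(n-1)^2 - \alpha(n)^2$, and substitution reorganises the non-logarithmic combination $\sum_n[b(n)^2 - \widetilde b(n)^2] + 2\sum_n[a(n)^2 - \widetilde a(n)^2]$ into $-\sum_n[\alpha(n)^4 - \alpha(n-1)^4] - 2\lambda_1 \sum_n[\alpha(n)^2 - \alpha(n-1)^2]$; these telescope using $\alpha(\pm\infty)^2 = k^{\pm 1}$ to produce $-(k^2 - k^{-2}) + 2(k+k^{-1})(k-k^{-1}) = k^2 - k^{-2}$, confirming the identity. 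Iterating over the finitely many eigenvalues of $W$ and invoking non-negativity of the residual right-hand side proves \eqref{eq:final} in the truncated case; the general case follows by monotone convergence for the right-hand side together with continuity of eigenvalues under finite-rank resolvent perturbations. The principal obstacle I anticipate is the algebraic bookkeeping underlying the one-step identity: the $b^2$, $a^2$, and logarithmic contributions must conspire through precisely the right telescoping at both ends of $\Z$, and this delicate cancellation depends on both the specific Darboux formulas for $\widetilde a, \widetilde b$ and the boundary asymptotics of $\psi$.
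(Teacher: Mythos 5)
Your proof is correct and follows essentially the same route as the paper: the commutation method with the same Darboux formulas for $\widetilde a,\widetilde b$, the same telescoping computations anchored to the Jost asymptotics of $\psi$ at both ends of $\Z$, the same $(-1)^n$ conjugation to handle positive eigenvalues, and the same appeal to $\log x\le x-1$ to conclude non-negativity of the residual right-hand side. The only difference is cosmetic: you package the per-eigenvalue calculation as a single identity showing that the right-hand side of \eqref{eq:final} drops by exactly $k^2-k^{-2}-\log k^4$ under commutation, whereas the paper tracks $\sum_n b(n)^2$ and $\prod_n a(n)^2$ separately and combines them only at the end.
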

A proof of this theorem based on the commutation method will be given in  Section \ref{sec:proof}. 
A similar result can be shown to hold for Jacobi operators with matrix-valued potentials. On $\ell^2(\Z,\C^m)$ we consider the operator 
\begin{align}
(Wu)(n)=A(n-1)^*u(n-1)+A(n)u(n+1)+B(n)u(n)
\label{eq:defHmatrix} 
\end{align}
where the off-diagonal entries $A(n)\in\C^{m\times m}$ are invertible matrices for all $n\in\Z$ and the potential $B(n)\in\C^{m\times m}$ is Hermitian. We furthermore assume that $A(n)\to-\id$ as $|n|\to\infty$ and that $\norm{A(n)}$ and $\norm{B(n)}$ are uniformly bounded for $n\in\Z$. The latter guarantees that the operator $W$ is well-defined on the domain $\ell^2(\Z,\C^m)$, where it can be shown to be self-adjoint. In the special case of $A\equiv-\id$ we call $W$ a discrete Schr{\"o}dinger operator with matrix-valued potential. For a survey on Jacobi operators with matrix-valued potentials we refer to \cite{Damanik2008}. Let $|\lambda_1|\ge|\lambda_2|\ge\dots\ge2$ bet the eigenvalues of $W$ outside $[-2,2]$, arranged by decreasing modulus. Each eigenvalue $\lambda_j$ has multiplicity $m_j\le m$ and $|k_j|>1$ are defined as above. 

\begin{theorem}\label{th:mainmatrix}
Assume $\tr(B^2)\in\ell^2(\Z)$ and that $\sum_{n\in\Z}\tr(A(n)A(n)^*-\id)-\log\det(A(n)A(n)^*)$ is finite. Then it holds that
\begin{align}
\sum_{j} m_j\left(k_j^2-\frac{1}{k_j^2}-\log|k_j|^4\right) 
\le\sum_{n\in\Z}\tr\big(B(n)^2\big)+2\sum_{n\in\Z}\tr\big(A(n)A(n)^*-\id\big)-\log\det\big(A(n)A(n)^*\big)
\label{eq:finalmatrix} 
\end{align}
and this inequality is sharp.
In particular, the discrete Schr{\"o}dinger operator with $A(n)=-\id$ for all $n\in\Z$ and $\tr(B^2)\in\ell^2(\Z)$ satisfies
\begin{align}
\sum_{j}  m_j\left(k_j^2-\frac{1}{k_j^2}-\log|k_j|^4\right)  \le\sum_{n\in\Z}\tr\big(B(n)^2\big)\,.
\label{eq:finalmatrix1} 
\end{align}

\end{theorem}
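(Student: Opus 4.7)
The plan is to extend the commutation-method proof of Theorem~2.1 (carried out in Section~\ref{sec:main}) to the matrix-valued setting. The structural steps---Darboux-type factorization of $W-\lambda$, commutation to form a new Jacobi operator $\tilde W$ with one fewer eigenvalue, and an explicit one-step change-of-coefficients identity---all admit natural matrix analogues, provided scalar multiplication is replaced by matrix multiplication with due care for noncommutativity and that $\log\det$ takes the place of $\log$ wherever scalar logarithms previously acted on positive quantities.

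First I would reduce to finitely many eigenvalues outside $[-2,2]$. The summability assumptions on $B$ and $A-(-\id)$ bound the right-hand side of \eqref{eq:finalmatrix}, and a standard trial-state argument shows that the number of eigenvalues $\lambda_j$ with $|k_j|\ge 1+\delta$ is finite for every $\delta>0$. Truncating to the $N$ largest eigenvalues and passing to the limit will then reduce the problem to one step at a time.

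Next, for the largest eigenvalue $\lambda_1$ with multiplicity $m_1$, I would construct the $m\times m_1$ matrix-valued solution $\Phi$ of $W\Phi=\lambda_1\Phi$ that is minimal (decaying) at $-\infty$, chosen so that $\Phi(n)$ is invertible for all $n$. Using $\Phi$, I would factorize
\begin{align*}
W-\lambda_1 \,=\, \mathcal L^*\mathcal L, \qquad (\mathcal L u)(n) \,=\, \Gamma(n)\, u(n+1) + \Delta(n)\, u(n),
\end{align*}
with matrix coefficients $\Gamma(n),\Delta(n)$ determined by the Riccati quantity $\Phi(n+1)\Phi(n)^{-1}$. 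The commuted operator $\tilde W = \mathcal L\mathcal L^*+\lambda_1$ is again of the form \eqref{eq:defHmatrix}, with new entries $\tilde A(n)$, $\tilde B(n)$, essential spectrum $[-2,2]$, and discrete spectrum obtained from $W$ by removing $\lambda_1$ with its full multiplicity $m_1$. Writing $\tilde A$ and $\tilde B$ in closed form in terms of $A,B$ and $\Phi$, the core task is then to establish the one-step estimate
\begin{align*}
\sum_n \tr\bigl(B(n)^2-\tilde B(n)^2\bigr) + 2\sum_n\Bigl[\tr\bigl(A(n)A(n)^*-\tilde A(n)\tilde A(n)^*\bigr)-\log\det\bigl(A(n)A(n)^*\bigr)+\log\det\bigl(\tilde A(n)\tilde A(n)^*\bigr)\Bigr] \,\ge\, m_1\Bigl(k_1^2-\tfrac{1}{k_1^2}-\log|k_1|^4\Bigr),
\end{align*}
with equality when $W$ is reflectionless. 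Iterating peels off the entire discrete spectrum and reduces the statement to the residual inequality $0\le \sum_n\tr(B(n)^2)+2\sum_n[\tr(AA^*-\id)-\log\det(AA^*)]$, which follows at once from the elementary bound $x-1-\log x\ge 0$ applied to the eigenvalues of $AA^*$.

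The main obstacle will be the matrix factorization: ensuring that the chosen $\Phi$ really does yield a bona fide factorization $W-\lambda_1=\mathcal L^*\mathcal L$ with invertible coefficients and that the commuted operator $\mathcal L\mathcal L^*+\lambda_1$ has the structural form \eqref{eq:defHmatrix}, so that the procedure can be iterated without leaving the class of Jacobi operators. Unlike the scalar case, there is no canonical ordering of "positive" matrix solutions, and the telescoping identity producing exactly $k_1^2-k_1^{-2}-\log|k_1|^4$ on the right relies on the cyclicity of the trace and on the identity $\log\det=\tr\log$ for positive matrices. Once this algebraic matching is in place, sharpness follows by checking equality on the explicit reflectionless examples constructed in Section~\ref{sec:sharp}, and the discrete-Schr\"odinger specialisation \eqref{eq:finalmatrix1} is immediate since the second sum vanishes identically for $A\equiv -\id$.
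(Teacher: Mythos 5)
Your high-level plan---matrix Darboux factorization of $W-\lambda_1$, commutation to a new Jacobi operator with $\lambda_1$ removed, a one-step bookkeeping identity, and iteration---is the same strategy the paper uses. But your proposal skips over the two places where the matrix case genuinely departs from the scalar one, and these are not minor details to be filled in later.

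First, your ``$m\times m_1$ matrix-valued solution $\Phi$ \dots chosen so that $\Phi(n)$ is invertible'' is already inconsistent when $m_1<m$: a non-square matrix cannot be invertible, and a rectangular $\Phi$ would not produce a square Riccati variable $F(n)=-A(n)\Phi(n+1)\Phi(n)^{-1}$ nor square commuted coefficients $\tilde A,\tilde B$. The paper instead takes the full $m\times m$ matrix solution normalized by $\Phi(n)=k_1^n\id$ at $-\infty$; the price is that $\Phi(n)$ at $+\infty$ is a \emph{mixture} of growing and decaying columns, and $F(n)$ does not stabilize at $k_1^{-1}\id$ the way $\varphi(n+1)/\varphi(n)$ did in the scalar case. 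Lemma~\ref{lem:F} is the crucial and genuinely new ingredient here: it shows that $F(n)$ for $n>n_{\max}$ has an eigenvalue $1/k_1$ of multiplicity exactly $m_1$ while the remaining $m-m_1$ eigenvalues tend to $k_1$, which is what makes the boundary terms $\tr F(N)^{-1}$, $\tr F(N)^{-2}$ and $\log\det F(N)$ produce precisely the coefficient $m_1$ in front of $k_1^2-k_1^{-2}-\log|k_1|^4$. Without this analysis the telescope simply does not close.

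Second, because of the same mixing at $+\infty$, the commuted coefficients $B_1$ and $A_1$ are \emph{not} compactly supported and do not equal $0$ and $-\id$ outside a finite box---they only decay exponentially to those values. So the iteration you propose cannot be carried out directly inside the class of compactly perturbed operators; repeating the construction on $W_1$ immediately runs into the same boundary-term computations without the finite-sum/telescoping simplifications that made the first step work. The paper handles this by truncating after each step (replacing $A_1,B_1$ by $A_1^c,B_1^c$ at a far-away cut $N^c$), tracking the resulting spectral and coefficient errors $e_j,f_j$, and letting $N^c\to\infty$ at the end; moreover, after the cut, the ground state of $W_1^c$ may have multiplicity $n_2\le m_2$ so one removes one multiplicity-group at a time and needs at most $m_1+\dots+m_M$ steps, not $M$. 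Your ``one-step inequality with equality in the reflectionless case'' also does not quite reflect the actual structure: per step one has an \emph{identity} (Proposition~\ref{prop:matrix}); the inequality enters only at the very end by discarding $\sum_n\tr(B_s^2)\ge0$ and applying $\log x\le x-1$ to the eigenvalues of $A_s A_s^*$. These gaps---the asymptotic structure of $F(n)$, and the cut-off-with-error-control iteration---are the substantive content of the matrix proof, and they are missing from your outline.
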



\section{The proof of Theorem \ref{th:main}}\label{sec:proof}
For the moment, we shall assume that the potential $b$ vanishes if only $|n|$ is large enough, that is $b(n)=0$ for $|n|>n_{\max}$. We also assume that $a(n)=-1$ for $|n|>n_{\max}$. Under these assumptions the Jacobi operator $W$ has a finite number of eigenvalues outside of $[-2,2]$, which we denote by 
\begin{align*}
\lambda_1<\dots<\lambda_L<-2<2<\lambda_{L+1}<\dots<\lambda_M\,.
\end{align*}
Here, we have slightly changed the enumeration of the eigenvalues such that the first $L$ eigenvalues are negative and the rest positive. 
 Each of the $\lambda_j$ can be written uniquely as $\lambda_j=-k_j-\frac{1}{k_j}$, where $|k_j|\ge1$. 
 
\subsection{Elimination of negative eigenvalues}
Let $\varphi$ be the eigenfunction to the lowest eigenvalue $\lambda_1$, i.e. the solution to the equation $W\varphi=\lambda_1\varphi$.
Let us assume that $\lambda_1<-2$, so that $k_1>1$.
For $n<-n_{\max}$ the eigenequation  $W\varphi=\lambda_1\varphi$ takes a very simple form, 
\begin{align*}
-\varphi(n-1)-\varphi(n+1)=\lambda_1\varphi(n)\,.
\end{align*}
It is easy to verify that a complete set of solutions of this equation is given by the functions $k_1^n$ and $k_1^{-n}$. Since $\varphi\in\ell^2(\Z)$ and $k_1>1$ we conclude that $\varphi(n)=ck_1^n$ for $n<-n_{\max}$. Similarly we can treat the case $n>n_{\max}$ and we can describe the behaviour of $\varphi$ for sufficiently large or small $n$ as
\begin{align}
\varphi(n)=\begin{cases}
c k_1^n&,\,n<-n_{\max}\\
d k_1^{-n}&,\,n>n_{\max}
\end{cases}
\label{eq:phiasymp} 
\end{align}
with constants $c,d\in\R$. These constants can be chosen in such a way that $\varphi(n)>0$, which follows from the fact that $(W-\lambda_1)\ge0$, see for example \cite[Theorem 2.8]{Gesztesy1993b}.

Following \cite{Gesztesy1996} we now introduce the operator $D$ on $\ell^2(\Z)$ as
\begin{align*}
(Du)(n)=-\sqrt{\frac{-a(n)\varphi(n)}{\varphi(n+1)}}u(n+1)+\sqrt{\frac{-a(n)\varphi(n+1)}{\varphi(n)}}u(n)
\end{align*}
and note that its adjoint is given by
\begin{align*}
(D^*u)(n)=-\sqrt{\frac{-a(n-1)\varphi(n-1)}{\varphi(n)}}u(n-1)+\sqrt{\frac{-a(n)\varphi(n+1)}{\varphi(n)}}u(n)\,.
\end{align*}
The eigenequation $W\varphi=\lambda_1\varphi$ can be written as
\begin{align}
a(n-1)\frac{\varphi(n-1)}{\varphi(n)}+a(n)\frac{\varphi(n+1)}{\varphi(n)}=\lambda_1-b(n)
\label{eq:eigeneq} 
\end{align} 
which can be used to show that $D^*D=W-\lambda_1$. This identity allows us to conclude that the spectrum of the operator $D^*D$ coincides with the spectrum of $W$, shifted by $-\lambda_1$. In particular, the eigenvalues of $D^*D$ are $0,\lambda_2-\lambda_1,\dots,\lambda_M-\lambda_1$. 
We also consider the operator $DD^*$, which can be written as $DD^*=W_1-\lambda_1$ where $W_1$ is a Jacobi operator given by
\begin{align*}
(W_1u)(n)=a_1(n-1)u(n-1)+a_1(n)u(n+1)+b_1(n)u(n)\,,
\end{align*}
with off-diagonal entries 
\begin{align}
a_1(n)=-\frac{\sqrt{a(n)a(n+1)\varphi(n)\varphi(n+2)}}{\varphi(n+1)}
\label{eq:defa1} 
\end{align}
and potential
\begin{align}
b_1(n)=-a(n)\left(\frac{\varphi(n)}{\varphi(n+1)}+\frac{\varphi(n+1)}{\varphi(n)}\right)+\lambda_1\,.
\label{eq:defV1} 
\end{align}
Note that $a_1(n)=-1$ and $b_1(n)=0$ if only $|n|$ is sufficiently large.
A general result (see e.g. \cite{Deift1978}) shows that the operators $D^*D$ and $DD^*$ have the same spectrum with the possible exception of the eigenvalue zero. Assume that there is a non-vanishing function $\psi$ such that $DD^*\psi=0$. It follows that $D^*\psi=0$ and as a consequence, for $n<-n_{\max}$, it must hold that 
\begin{align*}
-\sqrt{\frac{1}{k_1}}\psi(n-1)+\sqrt{k_1}\psi(n)=0\,.
\end{align*}
Multiplying this equation by $\sqrt{k_1}$ we find that $\psi(n)=p k_1^{-n}$ with $p\in\R$. This function is only in $\ell^2(-\N)$ if $p=0$, which implies that $\psi$ vanishes everywhere. Consequently zero is not an eigenvalue of $DD^*$. Thus the Schr{\"o}dinger operator $W_1$ has precisely the eigenvalues $\lambda_2,\dots,\lambda_M$. 
We now aim to express $\sum_{n\in\Z}b_1(n)^2$ in terms of $k_1$ and the potential $b$. To this end we recall \eqref{eq:defV1} and compute that
\begin{align}
\sum_{n\in\Z}b_1(n)^2=
\sum_{n\in\Z}a(n)^2\left(\frac{\varphi(n)}{\varphi(n+1)}+\frac{\varphi(n+1)}{\varphi(n)}\right)^2-
2\lambda_1a(n)\left(\frac{\varphi(n)}{\varphi(n+1)}+\!\frac{\varphi(n+1)}{\varphi(n)}\right)+
\lambda_1^2\,.
\label{eq:V1square} 
\end{align}
Note that  the summands vanish if $|n|$ is sufficiently large, say if $|n|>N$. This allows us to write all the involved series as finite sums  and reorder as we please.
Using \eqref{eq:eigeneq} we can compute that
\begin{align*}
\sum_{n=-N}^N a(n)\left(\frac{\varphi(n)}{\varphi(n+1)}+\frac{\varphi(n+1)}{\varphi(n)}\right)&=\sum_{n=-N+1}^{N+1}a(n-1)\frac{\varphi(n-1)}{\varphi(n)}+\sum_{n=-N}^Na(n)\frac{\varphi(n+1)}{\varphi(n)}\\
&=a(N)\frac{\varphi(N)}{\varphi(N+1)}-a(-N-1)\frac{\varphi(-N-1)}{\varphi(-N)}\\
&\phantom{=}+\sum_{n=-N}^N \big(\lambda_1-b(n)\big)
\end{align*}
The two terms outside the sum can be calculated by recalling that choosing $N$ sufficiently large guarantees that $a(N)=a(-N-1)=-1$. From \eqref{eq:phiasymp} we then obtain  
\begin{align}
\sum_{n=-N}^N& a(n)\left(\frac{\varphi(n)}{\varphi(n+1)}+\frac{\varphi(n+1)}{\varphi(n)}\right)=
-k_1+\frac{1}{k_1} +\sum_{n=-N}^N \big(\lambda_1-b(n)\big).
\label{eq:1terms} 
\end{align}
To treat the remaining terms in \eqref{eq:V1square} we consider the square of equation \eqref{eq:eigeneq}. This allows us to establish that 
\begin{align*}
\sum_{n=-N}^N a(n)^2\left(\frac{\varphi(n)^2}{\varphi(n+1)^2}+\frac{\varphi(n+1)^2}{\varphi(n)^2}\right)
&=\sum_{n=-N+1}^{N+1}a(n-1)^2\frac{\varphi(n-1)^2}{\varphi(n)^2}+\sum_{n=-N}^N a(n)^2\frac{\varphi(n+1)^2}{\varphi(n)^2}\\
&=a(N)^2\frac{\varphi(N)^2}{\varphi(N+1)^2}-a(-N-1)^2\frac{\varphi(-N-1)^2}{\varphi(-N)^2}\\
&\phantom{=}+\!\sum_{n=-N}^N\!\!\big(\lambda_1-b(n)\big)^2-2a(n)a(n-1)\frac{\varphi(n-1)\varphi(n+1)}{\varphi(n)^2}.
\end{align*}
Note that the second factor in the sum on the right-hand-side equals $a_1(n-1)^2$ and using once again \eqref{eq:phiasymp} as well as $a(N)=a(-N-1)=-1$ we obtain the identity
\begin{align}
\sum_{n=-N}^N a(n)^2\!\left(\frac{\varphi(n)^2}{\varphi(n+1)^2}+\frac{\varphi(n+1)^2}{\varphi(n)^2}\right)=
k_1^2-\frac{1}{k_1^2}+\!\sum_{n=-N}^N\big(\lambda_1-b(n)\big)^2\!-2a_1(n-1)^2\,.
\label{eq:2terms} 
\end{align}
Inserting \eqref{eq:1terms} and \eqref{eq:2terms} into \eqref{eq:V1square} finally yields
\begin{align*}
\sum_{n=-N}^N b_1(n)^2= -k_1^2+\frac{1}{k_1^2}+\sum_{n=-N}^Nb(n)^2
+2\sum_{n=-N}^N \big(a(n)^2-a_1(n-1)^2\big)\,.
\end{align*}
The summands in all three sums vanish if $n$ is sufficiently large or small. In particular we can rewrite the equation as
\begin{align*}
\sum_{n\in\Z}b_1(n)^2=-k_1^2+\frac{1}{k_1^2}+\sum_{n\in\Z}b(n)^2+2\sum_{n\in\Z}\big(a(n)^2-a_1(n)^2\big)\,.
\end{align*}
The off-diagonal entries $a_1(n)$ are linked to $a(n)$ via \eqref{eq:defa1} and we observe that
\begin{align*}
\prod_{n=-N}^N a_1(n)^2
&=\prod_{n=-N}^{N}\frac{\varphi(n)}{\varphi(n+1)}\prod_{n=-N+1}^{N+1}\frac{\varphi(n+1)}{\varphi(n)}
\prod_{n=-N}^N a(n)a(n+1)\\
&=\frac{\varphi(-N)}{\varphi(-N+1)}\frac{\varphi(N+2)}{\varphi(N+1)}\prod_{n=-N}^N a(n)a(n+1)\,.
\end{align*}
Recalling \eqref{eq:phiasymp} and that $a(n)=a_1(n)=-1$ for sufficiently large $|n|$, this shows that
\begin{align*}
\prod_{n\in\Z}a_1(n)^2=\frac{1}{k_1^2}\prod_{n\in\Z}a(n)^2\,.
\end{align*}

If $\lambda_2<-2$ we can now repeat the above procedure for the operator $W_1$. Starting from the eigenfunction $\varphi$ for the lowest eigenvalue $\lambda_2$ of $W_1$ we can construct a new operator $W_2$ with off-diagonal entries $a_2$ and with potential $b_2$, which are defined as in \eqref{eq:defa1} and \eqref{eq:defV1}, respectively, where $a_1$ now takes the place of $a$ and $\lambda_2$ takes the place of $\lambda_1$. The operator $W_2$ then has the eigenvalues $\lambda_3,\dots,\lambda_M$ and it holds that
\begin{align*}
\sum_{n\in\Z}b_2(n)^2=-k_2^2+\frac{1}{k_2^2}+\sum_{n\in\Z}b_1(n)^2+2\sum_{n\in\Z}\big(a_1(n)^2-a_2(n)^2\big)
\end{align*}
as well as
\begin{align*}
\prod_{n\in\Z}a_2(n)^2=\frac{1}{k_2^2}\prod_{n\in\Z}a_1(n)^2\,.
\end{align*}
We can proceed in this manner eliminating all the negative eigenvalues and thus prove the following proposition.

\begin{proposition}\label{prop:positive}
Let $W$ be the Jacobi operator \eqref{eq:defH} with $a(n)=-1$ and $b(n)=0$ if only $|n|$ is sufficiently large. 
Let $\lambda_1<\dots<\lambda_L<-2$ be the negative eigenvalues of $W$ and define $k_j>1 $ such that $\lambda_j=-k_j-\frac{1}{k_j}$. It holds that
\begin{align}
\sum_{n\in\Z}b_L(n)^2=\sum_{n\in\Z}b(n)^2-\sum_{j=1}^L\left( k_j^2-\frac{1}{k_j^2}\right)+2\sum_{n\in\Z}\big(a(n)^2-a_L(n)^2\big)
\label{eq:pos1} 
\end{align}
where the off-diagonal entries $a_L(n)$ and the potential $b_L(n)$ are obtained recursively by the process described in detail above. It also holds that
\begin{align}
\prod_{n\in\Z}a_L(n)^2=\prod_{j=1}^L\frac{1}{k_j^2}\,\prod_{n\in\Z}a(n)^2\,.
\label{eq:pos2} 
\end{align}

\end{proposition}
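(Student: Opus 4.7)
The plan is to prove the proposition by induction on $L$, using the single-step construction that the excerpt has already worked out in detail as the base case. The two identities \eqref{eq:pos1} and \eqref{eq:pos2} are essentially telescoped versions of the one-step recurrences
\begin{align*}
\sum_{n\in\Z}b_{j+1}(n)^2 &= -k_{j+1}^2+\frac{1}{k_{j+1}^2}+\sum_{n\in\Z}b_j(n)^2+2\sum_{n\in\Z}\bigl(a_j(n)^2-a_{j+1}(n)^2\bigr), \\
\prod_{n\in\Z}a_{j+1}(n)^2 &= \frac{1}{k_{j+1}^2}\prod_{n\in\Z}a_j(n)^2,
\end{align*}
so once the inductive step is in place, the final formulas follow by summing these relations for $j=0,\ldots,L-1$ (with $W_0=W$, $a_0=a$, $b_0=b$) and noticing that the inner terms $a_j(n)^2$ cancel in a telescoping manner, while the products collapse to the factor $\prod_{j=1}^L k_j^{-2}$.

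The inductive step requires checking that $W_j$, built from the commutation construction applied to $W_{j-1}$, satisfies all the structural hypotheses that made the base-case analysis work for $W$. Specifically, I would verify in order: (i) $a_j(n)=-1$ and $b_j(n)=0$ for $|n|$ sufficiently large, which is immediate from the explicit formulas \eqref{eq:defa1}, \eqref{eq:defV1} combined with the exponential-tail description \eqref{eq:phiasymp} of the relevant eigenfunction; (ii) the spectrum of $W_j$ below $-2$ is exactly $\{\lambda_{j+1},\ldots,\lambda_L\}$, which follows inductively from the isospectral relation $D^*D=W_{j-1}-\lambda_j$ together with $DD^*=W_j-\lambda_j$ and the argument already given in the text ruling out zero as an eigenvalue of $DD^*$; (iii) the lowest eigenvalue $\lambda_{j+1}$ of $W_j$ (when it is still below $-2$) admits a strictly positive eigenfunction with the asymptotic form $c\, k_{j+1}^n$ on the left and $d\, k_{j+1}^{-n}$ on the right, which follows from $W_j-\lambda_{j+1}\ge 0$ via \cite[Theorem 2.8]{Gesztesy1993b} and from the fact that the recurrence for $W_j$ reduces to $-\varphi(n-1)-\varphi(n+1)=\lambda_{j+1}\varphi(n)$ outside a finite window.

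Once (i)--(iii) are verified, the computation carried out in the base case applies verbatim to $W_j$ in place of $W$ and produces the two one-step recurrences displayed above, thereby completing the induction. The telescoping then goes through without further work: iterating the sum identity gives
\begin{align*}
\sum_{n\in\Z}b_L(n)^2 = \sum_{n\in\Z}b(n)^2 - \sum_{j=1}^L\Bigl(k_j^2-\frac{1}{k_j^2}\Bigr) + 2\sum_{n\in\Z}\bigl(a(n)^2-a_L(n)^2\bigr),
\end{align*}
and iterating the product identity yields \eqref{eq:pos2}.

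The main obstacle is really the bookkeeping in item (iii): one has to argue that after $j$ commutation steps the surviving operator $W_j$ still has a finite-rank perturbation structure \emph{and} a genuine ground state below $-2$, so that the tail analysis based on \eqref{eq:phiasymp} can be repeated. This is not automatic from the spectral statement alone, but it follows because the commutation method preserves both the compact support of $a_j(n)+1$ and $b_j(n)$ (shown in (i)) and the strict negativity of the bottom eigenvalue whenever $\lambda_{j+1}<-2$; no new analytical input is needed beyond what has already been used in the base case. After that, everything reduces to routine indexed calculation, and no additional sharp estimates are required at this stage of the argument.
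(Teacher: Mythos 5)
Your proposal is correct and is essentially the paper's own proof made formal: the paper derives the one-step identities for $\sum_n b_1(n)^2$ and $\prod_n a_1(n)^2$ and then remarks ``We can proceed in this manner eliminating all the negative eigenvalues and thus prove the following proposition,'' which is precisely your induction and telescoping scheme. The structural checks (i)--(iii) you spell out (compact support of $a_j+1$ and $b_j$ preserved by the commutation step, $D^*D$ and $DD^*$ isospectral off zero, strict positivity and exponential tails of the new ground state via $(W_j-\lambda_{j+1})\ge 0$) are exactly the facts the paper verifies before iterating, so the two arguments coincide.
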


The Jacobi operator $(W_Lu)(n)=a_L(n-1)u(n-1)+a_L(n)u(n+1)+b_L(n)u(n)$ has only positive eigenvalues $2<\lambda_{L+1}<\dots<\lambda_M$ and the property that $a_L(n)=-1$ and $b_L(n)=0$ if only $|n|$ is sufficiently large.

\subsection{Elimination of positive eigenvalues}
To eliminate the positive eigenvalues we have to proceed in a slightly different manner. The reason is that $W_L-\lambda_{L+1}$ is not a positive operator. This  means that $\varphi$, the eigenfunction corresponding to $\lambda_{L+1}$, is not strictly positive which can also be seen from the fact that  $k_{L+1}<-1$ and that 
$\varphi(n)=ck_{L+1}^n$ for sufficiently small $n$.
To circumvent this problem, we consider the new operator $W_L'$ which is defined as
\begin{align*}
(W_L'u)(n)=a_L(n-1)u(n-1)+a_L(n)u(n+1)-b_L(n)u(n)\,.
\end{align*}
If $\psi\in\ell^2(\Z)$ is an eigenfunction of $W_L$ with eigenvalue $\lambda$ then $\psi'(n)=(-1)^{n}\psi(n)$ is also a sequence in $\ell^2(\Z)$ and we compute that
\begin{align*}
(W_L'\psi')(n)
=-(-1)^{n}(W_L\psi)(n)=-\lambda\psi'(n)\,.
\end{align*}
Thus $-\lambda$ is an eigenvalue of $W_L'$ and a similar computation shows that every eigenvalue $\eta$ of $W_L'$ corresponds to an eigenvalue $\lambda=-\eta$ of the operator $W_L$. We conclude that the eigenvalues of $W_L'$ are exactly $-\lambda_M<\dots<-\lambda_{L+1}<-2$.  These eigenvalues can be written as $\eta_j=-\lambda_j=-\ell_j-\frac{1}{\ell_j}$ with $\ell_j=-k_j>1$. 
The results of Proposition \ref{prop:positive} now also hold for the operator $W_L'$ and we thus obtain the identity
\begin{align}
\sum_{n\in\Z}b_{M}(n)^2=\sum_{n\in\Z}b_L(n)^2-\sum_{j=L+1}^M \left(\ell_{j}^2-\frac{1}{\ell_{j}^2}\right)+2\sum_{n\in\Z}\big(a_L(n)^2-a_{M}(n)^2\big)
\label{eq:neg1} 
\end{align}
as well as
\begin{align}
\prod_{n\in\Z}a_M(n)^2=\prod_{j=L+1}^M\frac{1}{\ell_j^2}\,\prod_{n\in\Z}a_L(n)^2\,.
\label{eq:neg2} 
\end{align}
The Jacobi operator $W_M$ with potential $b_M$ and off-diagonal entries $a_M$ emerges from the (all-together) $M$-th elimination process in which we remove the last remaining eigenvalue $\eta_{L+1}=-\lambda_{L+1}$ from the spectrum and $W_M$ has thus no eigenvalues outside of $[-2,2]$.
Recalling that $\ell_j=-k_j$ we can combine \eqref{eq:neg1} and \eqref{eq:neg2} with \eqref{eq:pos1} and \eqref{eq:pos2} to obtain
\begin{align}
\sum_{n\in\Z}b_M(n)^2=\sum_{n\in\Z}b(n)^2-\sum_{j=1}^M\left( k_j^2-\frac{1}{k_j^2}\right)+2\sum_{n\in\Z}\big(a(n)^2-a_M(n)^2\big)
\label{eq:final1} 
\end{align}
and also
\begin{align}
\prod_{n\in\Z}a_M(n)^2=\prod_{j=1}^M\frac{1}{k_j^2}\,\prod_{n\in\Z}a(n)^2\,.
\label{eq:aNa} 
\end{align}
The left-hand-side of \eqref{eq:final1} is clearly positive so that we can deduce the inequality
\begin{align}
\sum_{j=1}^M\left( k_j^2-\frac{1}{k_j^2}\right)\le\sum_{n\in\Z}b(n)^2+2\sum_{n\in\Z}(a(n)^2-a_M(n)^2)\,.
\label{eq:final2} 
\end{align}
We now aim to find an upper bound on the term $\sum_{n\in\Z}\big(a(n)^2-a_M(n)^2\big)$ that only explicitly  depends on $a(n)$ and the eigenvalues of $W$. To this end we apply the logarithmic function to both sides of \eqref{eq:aNa} and derive that
\begin{align*}
\sum_{n\in\Z}\left( \log a(n)^2-\log a_M(n)^2\right)=\sum_{j=1}^M\log |k_j|^2\,.
\end{align*}
For positive real numbers $x\in(0,\infty)$ it holds that $\log x\le x-1$ and using this inequality and the identity above we can conclude that
\begin{align*}
\sum_{n\in\Z} \left(a(n)^2-a_M(n)^2\right) \le\sum_{j=1}^M\log |k_j|^2+\sum_{n\in\Z} \left(a(n)^2-1-\log a(n)^2\right)\,.
\end{align*}
Inserting this inequality into \eqref{eq:final2} yields the desired result
\begin{align*}
\sum_{j=1}^M\left( k_j^2-\frac{1}{k_j^2}-\log |k_j|^4\right)\le\sum_{n\in\Z}b(n)^2+2\sum_{n\in\Z}\left( a(n)^2-1-\log a(n)^2\right)\,.
\end{align*}

It remains to extend this result to Jacobi operators which do not necessarily satisfy that $b(n)=0$ and $a(n)=-1$ for sufficiently large $|n|$. For a given Jacobi operator $W$ and $N\in\N\cup\{0\}$ we define the truncation $W_N$ as the Jacobi operator with potential and off-diagonal terms specified by 
\begin{align*}
b_N(n)=\begin{cases}
b(n)&,\,|n|< N\\
0&,\,|n|\ge N\,,
\end{cases}
&&
a_N(n)=\begin{cases}
a(n)&,\,|n|< N\\
-1&,\,|n|\ge N\,.
\end{cases}
\end{align*}
In particular, for $N=0$ we obtain the free discrete Schr{\"o}dinger operator $W_0$ with $b_0(n)=0$ and $a_0(n)=-1$ for all $n\in\Z$. Each $W_N$ fulfils the initial assumptions of our proof and thus satisfies inequality \eqref{eq:final}. 
If $W$ is such that the right-hand-side of \eqref{eq:final} is finite, then $W-W_0$ is compact and a perturbation argument (see e.g. \cite[Theorem 6.2]{Killip2003}) shows that the left-hand-side of \eqref{eq:final} for $W_N$ converges to the left-hand-side of \eqref{eq:final} for $W$ as $N\to\infty$. Similarly the right-hand-sides converge and thus the inequality also holds for $W$.


\section{The proof of Theorem \ref{th:mainmatrix}}

For now we also assume that $B$ is compactly supported, so that the potential vanishes if $|n|>n_{\max}$ and that also $A(n)=-\id$ for $|n|>n_{\max}$.
These assumptions will later be dropped by a continuity argument but for the moment they guarantee that $W$  has only finitely many eigenvalues outside of $[-2,2]$. 
Let these eigenvalues be denoted by
\begin{align*}
\lambda_1<\dots<\lambda_L<-2<2<\lambda_{L+1}<\dots<\lambda_M\,.
\end{align*}
Each eigenvalue can be written as $\lambda_j=-k_j-\frac{1}{k_j}$ with $|k_j|\ge1$ and has a multiplicity $m_j\le m$.
We assume that $\lambda_1<-2$ and thus $k_1>1$. Define $\Phi$ to be the matrix solution of the equation
\begin{align*}
A(n-1)^*\Phi(n-1)+A(n)\Phi(n+1)+B(n)\Phi(n)=\lambda_1\Phi(n)
\end{align*}
which satisfies
\begin{align}
\Phi(n)=k_1^n\id
\label{eq:Phiasymp} 
\end{align}
for $n<-n_{\max}$. 
Note that the quadratic form of $W$ is given by 
\begin{align}
\sklp[\ell^2]{Wu,u}=\sum_{n\in\Z}\sklp[\C^m]{A(n-1)^*u(n-1)+A(n)u(n+1)+B(n)u(n),u(n)}\,.
\label{eq:quadraticform} 
\end{align}

\begin{lemma}\label{lem:Phi}
$\Phi(n)$ is invertible for all $n\in\Z$. Furthermore it holds that $-A(n)\Phi(n+1)\Phi(n)^{-1}$ is an invertible, Hermitian, positive-semidefinite matrix for all $n\in\Z$. 
\end{lemma}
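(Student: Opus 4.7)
My approach proceeds in three steps: a Wronskian identity for the matrix
\[
N(n):=\Phi(n)^*A(n)\Phi(n+1),
\]
in terms of which $M(n):=-A(n)\Phi(n+1)\Phi(n)^{-1} = -\Phi(n)^{-*}N(n)\Phi(n)^{-1}$ whenever $\Phi(n)$ is invertible; a ground-state test-function argument for invertibility of $\Phi(n)$; and a variant of the same argument for positive definiteness of $M(n)$.

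For the Wronskian identity I would rewrite the eigenequation as $A(n)\Phi(n+1) = (\lambda_1-B(n))\Phi(n) - A(n-1)^*\Phi(n-1)$ and multiply on the left by $\Phi(n)^*$, obtaining $N(n) = \Phi(n)^*(\lambda_1-B(n))\Phi(n) - N(n-1)^*$. The first summand is Hermitian (since $B^*=B$ and $\lambda_1\in\R$), so $N(n)-N(n)^* = N(n-1)-N(n-1)^*$ is independent of $n$. Evaluating at $n<-n_{\max}$ with $\Phi(n)=k_1^n\id$ and $A(n)=-\id$ gives $N(n) = -k_1^{2n+1}\id$, which is manifestly Hermitian; hence $N(n) = N(n)^*$ for every $n$, and $M(n)$ is Hermitian as soon as $\Phi(n)$ is invertible.

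To prove invertibility of $\Phi(n_0)$ I would argue by contradiction. Assume $\Phi(n_0)v = 0$ for some $v\ne 0$ and set
\[
u(n) = \begin{cases}\Phi(n)v, & n<n_0,\\ 0, & n\ge n_0.\end{cases}
\]
Since $\Phi(n)=k_1^n\id$ with $k_1>1$ in the region $n<-n_{\max}$, $u\in\ell^2(\Z,\C^m)$ and $u\ne 0$. A direct computation using the eigenequation and the assumption $\Phi(n_0)v=0$ shows that $(W-\lambda_1)u$ is concentrated at $n_0$, with value $A(n_0-1)^*\Phi(n_0-1)v$. Because $u(n_0)=0$, the inner product $\langle(W-\lambda_1)u, u\rangle$ vanishes; since $\lambda_1$ is the lowest eigenvalue $W-\lambda_1\ge 0$, and a non-negative form vanishing on $u$ annihilates $u$. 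Hence $(W-\lambda_1)u = 0$ pointwise, yielding $A(n_0-1)^*\Phi(n_0-1)v = 0$, and invertibility of $A(n_0-1)$ gives $\Phi(n_0-1)v = 0$. Iterating produces $\Phi(n)v = 0$ for all $n\le n_0$, contradicting the asymptotic $\Phi(n) = k_1^n\id$ deep in the negative region.

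For positive definiteness of $M(n_0)$ I would apply the same test-function idea but include $n_0$ in the support: let $u(n) = \Phi(n)v$ for $n\le n_0$ and $u(n) = 0$ for $n>n_0$. An analogous computation gives $(W-\lambda_1)u$ supported on $\{n_0,n_0+1\}$ with values $-A(n_0)\Phi(n_0+1)v$ and $A(n_0)^*\Phi(n_0)v$ respectively, whence
\[
\langle(W-\lambda_1)u, u\rangle = -\langle A(n_0)\Phi(n_0+1)v, \Phi(n_0)v\rangle = -\langle N(n_0)v, v\rangle.
\]
Non-negativity of the left-hand side for every $v$ forces $N(n_0)\le 0$, so $M(n_0) = -\Phi(n_0)^{-*}N(n_0)\Phi(n_0)^{-1}\ge 0$; the fact that $\Phi(n_0)$, $\Phi(n_0+1)$, and $A(n_0)$ are all invertible makes $N(n_0)$ invertible, promoting this to strict positive definiteness. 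The main obstacle is precisely the invertibility step: in the scalar setting invertibility of the ground-state eigenfunction was automatic from strict positivity, whereas in the matrix case only the $\ell^2$ / quadratic-form argument detects it, and the truncation must be chosen so that the defect $(W-\lambda_1)u$ collapses to the single lattice site at which $u$ itself vanishes.
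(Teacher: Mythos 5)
Your proof is correct, and the first half (invertibility of $\Phi(n)$) matches the paper's argument essentially verbatim: both take a truncation $u$ that vanishes from $n_0$ onward, use $W-\lambda_1\ge0$ to conclude $(W-\lambda_1)u=0$, and iterate down the lattice to reach a contradiction with the explicit form of $\Phi$ at $-\infty$. For the Hermitian/positive-semidefiniteness claim, however, your route is genuinely different from the paper's. The paper introduces, for each $N$, a half-line operator $W_N$ on $\ell^2(\Z\cap(-\infty,N),\C^m)$ with a Dirichlet condition at $N$, notes $W_N-\lambda_1\ge0$ by Min--Max, and tests with $\psi(n)=\Phi(n)\Phi(N)^{-1}A(N-1)^{-1}v$ so that $(W_N-\lambda_1)\psi=-\delta_{N-1}v$; the sign of $-\langle v,\Phi(N-1)\Phi(N)^{-1}A(N-1)^{-1}v\rangle$ then delivers both Hermiticity and positive semidefiniteness in one stroke. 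You instead keep the full-line operator $W$ throughout, choose your cutoff one site later so that $u(n_0)=\Phi(n_0)v\ne0$, and compute that $(W-\lambda_1)u$ is supported on $\{n_0,n_0+1\}$ with the inner product collapsing to $-\langle N(n_0)v,v\rangle$, $N(n)=\Phi(n)^*A(n)\Phi(n+1)$. This avoids the auxiliary Dirichlet operator and the Min--Max step entirely — a real simplification. Your separate Wronskian identity establishing that $N(n)$ is Hermitian is correct and pleasant, but it is strictly redundant: once you know $\langle N(n_0)v,v\rangle\le0$ for every $v\in\C^m$, Hermiticity already follows because, over $\C$, a matrix with real (in particular, nonpositive) numerical range is automatically self-adjoint — which is in fact exactly how the paper extracts "and in particular Hermitian" from its own quadratic-form bound.
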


\begin{proof}
The proof of invertibility is adapted from similar results \cite{Benguria2000} in the case of a Schr{\"o}dinger operator on $L^2(\R)$.
The matrix function $\Phi(n)$ is by construction invertible for all $n<-n_{\max}$. Assume that there exists an $N\in\N$ and a $u\in\C^m$ such that $\Phi(N)u=0$. Clearly the vector function $\Phi(n)u$ satisfies $(W-\lambda_1)\Phi(n)u=0$. We can now define a new vector function $\varphi$ as
\begin{align*}
\varphi(n)=\begin{cases}
\Phi(n)u&,\,n<N\\
0&,\,n\ge N
\end{cases}
\end{align*}
and from \eqref{eq:quadraticform} we see that $\sklp[\ell^2]{W\varphi,\varphi}=\lambda_1\sklp[\ell^2]{\varphi,\varphi}$. As $\lambda_1$ is the ground state we conclude that $\varphi$ must therefore satisfy $(W-\lambda_1)\varphi=0$. However, this implies that $\varphi(n)=0$ for all $n\in\N$ which is a contradiction as $\Phi(n)$ is invertible for $n<-n_{\max}$. 

We proceed similarly as in the scalar case in \cite{Gesztesy1996} and note that  $(W-\lambda_1)\ge 0$. For any $N\in\Z$ consider now the operator $W_N$ which is defined as $W$ restricted to the space $\ell^2(\Z\cap(-\infty,N),\C^m)$ with a Dirichlet boundary condition at $N$, i.e.
\begin{align*}
(W_Nu)(n)=\begin{cases}
(Wu)(n)&,\,n\le N-2\\
A(N-2)^*u(n-2)+B(N-1)u(N-1)&,\,n=N-1\,.
\end{cases}
\end{align*}
Then, by the Min-Max principle, $(W_N-\lambda_1)\ge0$. Let $v\in\C^m$ be an arbitrary vector and consider the vector function $\varphi(n)=\Phi(n)\Phi(N)^{-1}A(N-1)^{-1}v$. By construction it holds that $(W-\lambda_1)\varphi=0$. We define the vector function $\psi\in\ell^2(\Z\cap(-\infty,N),\C^m)$ as $\psi(n)=\varphi(n)$ for $n\le N-1$.
For $n\le N-2$ it holds that $\big((W_N-\lambda_1)\psi\big)(n)=0$ and for $n=N-1$ we calculate 
\begin{align*}
\big((W_N-\lambda_1)\psi\big)(N-1)\!=\!A(N-2)^*\psi(N-2)+\!\big(B(N-1)-\lambda_1\big)\psi(N-1)\!
=\!-A(N-1)\varphi(N)\!=\!-v
\end{align*}
and thus conclude that $(W_N-\lambda_1)\psi=-\delta_{N-1}v$.
As $(W_N-\lambda_1)\ge0$ it holds that
\begin{align*}
0&\le\sklp[\ell^2]{(W_N-\lambda_1)\psi,\psi}=-\sklp[\ell^2]{\delta_{N-1}v,\psi}
=-\sklp[\C^m]{v,\psi(N-1)}\\
&=-\sklp[\C^m]{v,\Phi(N-1)\Phi(N)^{-1}A(N-1)^{-1}v}
\end{align*} 
This shows that $-A(N-1)\Phi(N)\Phi(N-1)^{-1}$ is a positive-semidefinite matrix and in particular Hermitian. 
\end{proof}

Since $(W-\lambda_1)\Phi=0$, we can conclude that
\begin{align*}
A(n-1)^*\Phi(n-1)\Phi(n)^{-1}+A(n)\Phi(n+1)\Phi(n)^{-1}=\lambda_1-B(n)\,.
\end{align*}
Defining $F(n)=-A(n)\Phi(n+1)\Phi(n)^{-1}$,  this can be written as a discrete Riccati-type equation
\begin{align}
A(n-1)^*F(n-1)^{-1}A(n-1)+F(n)=B(n)-\lambda_1\,.
\label{eq:riccatimatrix} 
\end{align}
We now introduce the operator $D$ on $\ell^2(\Z,\C^m)$ as
\begin{align*}
(Du)(n)=F(n)^{-\frac12}A(n)u(n+1)+F(n)^{\frac12}u(n)
\end{align*}
where we note that by Lemma \ref{lem:Phi} the matrices $F(n)^{\pm1/2}$ are well-defined.
The adjoint of $D$ is given by
\begin{align*}
(D^*u)(n)=A(n-1)^*F(n-1)^{-\frac12}u(n-1)+F(n)^{\frac12}u(n)\,.
\end{align*}
Using \eqref{eq:riccatimatrix} it can be shown that $D^*D=W-\lambda_1$. This identity allows us to conclude that the spectrum of the operator $D^*D$ coincides with the spectrum of $W$, shifted by $-\lambda_1$. In particular, the eigenvalues of $D^*D$ are $0,\lambda_2-\lambda_1,\dots,\lambda_M-\lambda_1$ with multiplicities $m_1,\dots, m_M$. 
We also consider the operator $DD^*$, which can be written as $DD^*=W_1-\lambda_1$ where $W_1$ is the Jacobi operator given by
\begin{align*}
(W_1u)(n)=A_1(n-1)^*u(n-1)+A_1(n)u(n+1)+B_1(n)u(n)\,,
\end{align*}
with off-diagonal entries 
\begin{align}
A_1(n)=F(n)^{-\frac12}A(n)F(n+1)^{\frac12}
\label{eq:defAmatrix} 
\end{align}
and potential
\begin{align}
B_1(n)=F(n)^{-\frac12}A(n)A(n)^*F(n)^{-\frac12}+F(n)+\lambda_1\,.
\label{eq:defVmatrix} 
\end{align}
Note that $A_1(n)=-\id$ and $B_1(n)=0$ if only $n$ is sufficiently small.
A general result (see e.g. \cite{Deift1978}) shows that, with the possible exception of the eigenvalue zero, the operators $D^*D$ and $DD^*$ have the same eigenvalues with the same multiplicities. Assume that there is a non-vanishing function $\psi$ such that $DD^*\psi=0$. It follows that $D^*\psi=0$ and as a consequence, for $n<-n_{\max}$, it must hold that 
\begin{align*}
-\sqrt{\frac{1}{k_1}}\psi(n-1)+\sqrt{k_1}\psi(n)=0\,.
\end{align*}
Multiplying this equation by $\sqrt{k_1}$ we find that $\psi(n)=k_1^{-n}v$ with $v\in\C^m$. This function is clearly only in $\ell^2(-\N, \C^m)$ if $v=0$, which implies that $\psi$ vanishes everywhere. Consequently zero is not an eigenvalue of $DD^*$. Thus the Schr{\"o}dinger operator $W_1$ has precisely the eigenvalues $\lambda_2,\dots,\lambda_M$ with multiplicities $m_2,\dots,m_M$. 

From the construction of $F$ it immediately follows that $F(n)=k_1\id$ for $n<-n_{\max}$ and thus
\begin{align*}
\tr\big(F(n)\big)=mk_1,&&
&\tr\big(F(n)^{-1}\big)=m\frac{1}{k_1}
\end{align*}
as well as
\begin{align*}
\tr\big(F(n)^2\big)=mk_1^2,&&
\tr\big(F(n)^{-2}\big)=m\frac{1}{k_1^2}\,.
\end{align*}
The next result describes the behaviour of all these terms if $n\to\infty$.

\begin{lemma}\label{lem:F}
At any point $n_0>n_{\max}$ the eigenvectors of $F(n_0)$ do not depend on $n_0$. The matrix $F(n_0)$ has an eigenvalue $\frac{1}{k_1}$ of multiplicity $m_1$ and every other eigenvalue $\mu(n_0)$ of $F(n_0)$ gives rise to a branch of eigenvalues $\mu(n)$ of $F(n)$ where $\mu(n)$ converges exponentially fast to $k_1$ for $n\to\infty$. 
In particular it holds that
\begin{align}
\lim_{n\to\infty}\tr\big(F(n)\big)=m_1\frac{1}{k_1}+(m-m_1)k_1,&&
&\lim_{n\to\infty}\tr\big(F(n)^{-1}\big)=m_1k_1+(m-m_1)\frac{1}{k_1}
\label{eq:asympF1} 
\end{align}
as well as
\begin{align}
\lim_{n\to\infty}\tr\big(F(n)^2\big)=m_1\frac{1}{k_1^2}+(m-m_1)k_1^2,&&
\lim_{n\to\infty}\tr\big(F(n)^{-2}\big)=m_1k_1^2+(m-m_1)\frac{1}{k_1^2}
\label{eq:asympF2} 
\end{align}
where the limits are achieved exponentially fast. Furthermore the potential $B_1(n)$ converges to $0$ and the off-diagonal entries $A_1(n)$ converge to $-\id$ exponentially fast for $n\to\infty$. 
\end{lemma}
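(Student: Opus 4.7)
The plan is to exploit that in the tail region $n > n_{\max}$ the Riccati equation \eqref{eq:riccatimatrix} collapses, since $A(n) = -\id$ and $B(n) = 0$, to the one-step recursion
$$F(n) = (k_1 + k_1^{-1})\id - F(n-1)^{-1}.$$
As the right-hand side is a polynomial in $F(n-1)^{-1}$, the Hermitian positive-definite matrices $F(n-1)$ and $F(n)$ commute. Iterating, $\{F(n)\}_{n > n_{\max}}$ is a commuting family that shares a common orthonormal eigenbasis, which is precisely the claim that the eigenvectors of $F(n_0)$ do not depend on $n_0$. Along any common eigenvector the eigenvalue evolves under the scalar M\"obius map $\mu \mapsto k_1 + k_1^{-1} - 1/\mu$, whose fixed points are $k_1$ and $k_1^{-1}$ with multipliers $k_1^{-2} < 1$ and $k_1^2 > 1$ respectively. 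Hence every branch not pinned at $k_1^{-1}$ converges to $k_1$ geometrically at rate $k_1^{-2n}$.

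To identify the $k_1^{-1}$-eigenspace I would use the explicit form of $\Phi$ in the tail. There the equation reads $\Phi(n-1) + \Phi(n+1) = \lambda_1 \Phi(n)$, so $\Phi(n) = k_1^n C_+ + k_1^{-n} C_-$ for fixed matrices $C_\pm$. For $v \in \ker C_+$, the vector $\Phi(n) v$ equals $k_1^n v$ for $n < -n_{\max}$ and $k_1^{-n} C_- v$ for $n > n_{\max}$, so it lies in $\ell^2(\Z, \C^m)$ and is an eigenvector of $W$ for $\lambda_1$. The invertibility of $\Phi(n)$ forces $C_-$ to be injective on $\ker C_+$, yielding $\dim \ker C_+ = m_1$. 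A direct calculation gives $F(n) \Phi(n) v = \Phi(n+1) v = k_1^{-1} \Phi(n) v$, so the $m_1$-dimensional subspace $C_-(\ker C_+) \subseteq \C^m$ sits inside the $k_1^{-1}$-eigenspace of every $F(n)$. For the reverse inclusion, suppose $F(n_0) v_0 = k_1^{-1} v_0$; by commutativity this propagates to $F(n) v_0 = k_1^{-1} v_0$ for all $n \geq n_0$, and setting $u_n = \Phi(n)^{-1} v_0$ one obtains $u_{n+1} = k_1 u_n$. Substituting into $v_0 = \Phi(n) u_n$ using the explicit form, the growing term $k_1^{2n - n_0} C_+ u_{n_0}$ must have vanishing coefficient, so $u_{n_0} \in \ker C_+$ and $v_0 \in C_-(\ker C_+)$. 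Thus the $k_1^{-1}$-eigenspace of $F(n_0)$ has dimension exactly $m_1$.

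The trace asymptotics \eqref{eq:asympF1}--\eqref{eq:asympF2} then follow immediately from these eigenvalue limits together with the geometric convergence rate. Substituting $A(n) = -\id$ into \eqref{eq:defAmatrix} and \eqref{eq:defVmatrix} yields $A_1(n) = -F(n)^{-1/2} F(n+1)^{1/2}$ and $B_1(n) = F(n)^{-1} + F(n) + \lambda_1$; since $F(n)$ and $F(n+1)$ share an eigenbasis with eigenvalues converging to the same limit ($k_1$ or $k_1^{-1}$) along each basis vector, and since $k_1 + k_1^{-1} + \lambda_1 = 0$, one reads off $A_1(n) \to -\id$ and $B_1(n) \to 0$ at the same exponential rate $k_1^{-2n}$. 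The most delicate step is the reverse inclusion in the characterisation of the $k_1^{-1}$-eigenspace, which requires combining the commutativity of the $F(n)$ with the explicit asymptotics of $\Phi$ to rule out any further eigenvectors at $k_1^{-1}$.
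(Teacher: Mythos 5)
Your proof is correct and arrives at the same conclusion via essentially the same underlying mathematics: the tail-region Riccati recursion collapses to a scalar M{\"o}bius dynamic on the eigenvalues of the commuting family $\{F(n)\}_{n>n_{\max}}$, with the $m_1$-fold repelling fixed point at $1/k_1$ picked out by square summability of the $\lambda_1$-eigenfunctions and the remaining branches flowing to the attracting fixed point $k_1$ at rate $k_1^{-2n}$. The only real stylistic difference is how the common eigenbasis is obtained: the paper solves the linear recurrence for $\Phi$ on the tail to express $\Phi(n)$ in terms of $\Phi(n_0)$ and $F(n_0)$, inverting to write $F(n)=f_n\big(F(n_0)\big)$ with an explicit M{\"o}bius function $f_n$ and reading off that the eigenvectors are common; you instead observe from \eqref{eq:riccatimatrix} that $F(n)=(k_1+k_1^{-1})\id-F(n-1)^{-1}$ is a function of $F(n-1)$, so the $F(n)$ commute, and iterate the one-step M{\"o}bius map. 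Your route is slightly more elementary in that it avoids writing down the closed form of $f_n$, and it is more careful than the paper in verifying both inclusions for the $1/k_1$-eigenspace; the paper's version makes the transition function explicit, which is perhaps marginally more transparent for the trace computations that follow. Both are valid.
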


\begin{proof}
If $n>n_0>n_{\max}$ then $\Phi(n)=Pk_1^{n-n_0}+Qk_1^{n_0-n}$ with matrices $P,Q\in\C^{m\times m}$. Since $(W-\lambda_1)\Phi=0$ it must hold that 
\begin{align*}
-\Phi(n_0)-\Phi(n_0+2)&=\lambda_1\Phi(n_0+1)\\
-\Phi(n_0+1)-\Phi(n_0+3)&=\lambda_1\Phi(n_0+2)
\end{align*}
and inserting $\Phi(n)=Pk_1^{n-n_0}+Qk_1^{n_0-n}$ into these two equations allows us to compute the matrices $P$ and $Q$ as
\begin{align*}
P&=\frac{k_1}{k_1^2-1}\left(-\frac{1}{k_1}\Phi(n_0)+\Phi(n_0+1)\right)\\
Q&=\frac{k_1}{k_1^2-1}\big(k_1\Phi(n_0)-\Phi(n_0+1)\big).
\end{align*}
Thus we can conclude that
\begin{align*}
\Phi(n)=
\frac{k_1}{k_1^2-1}\left(\left(k_1^{n_0-n+1}-k_1^{n-n_0-1}\right)\Phi(n_0)+\left(k_1^{n-n_0}-k_1^{n_0-n}\right)\Phi(n_0+1)\right)
\end{align*}
and as a consequence 
\begin{align}
\Phi(n)=
\frac{k_1}{k_1^2-1}\left(k_1^{n_0-n+1}-k_1^{n-n_0-1}+\left(k_1^{n-n_0}-k_1^{n_0-n}\right)F(n_0)\right)\Phi(n_0)\,.
\label{eq:Phinn0} 
\end{align}
Every eigenfunction $\varphi$ of $W$ to the eigenvalue $\lambda_1$ can be written as $\varphi(n)=\Phi(n)u$ with $u\in\C^m$. In order for $\varphi$ to be square-summable, it is necessary that the exponentially growing parts in \eqref{eq:Phinn0} vanish. As a consequence it must hold that $\varphi(n)=\Phi(n)\Phi(n_0)^{-1}v$ with $v\in\C^m$ such that $F(n_0)v=\frac{1}{k_1}v$. Since $\lambda_1$ is an $m_1$-fold eigenvalue of $W$ we conclude that $F(n_0)$ has an $m_1$-fold eigenvalue $\frac{1}{k_1}$. 
From \eqref{eq:Phinn0} we also conclude that $F(n)=f_n\big(F(n_0)\big)$ with
\begin{align*}
f_n(z)=\frac{k_1^{2n_0-2n}-1+\left(k_1-k_1^{2n_0-2n-1}\right)z}
{k_1^{2n_0-2n+1}-\frac{1}{k_1}+\left(1-k_1^{2n_0-2n}\right)z}\,.
\end{align*}
This shows that the eigenvalues of $F(n)$ are independent of $n$ and that every eigenvalue $\mu$ of $F(n_0)$ generates an eigenvalue $\mu(n)=f_n(\mu)$ for $F(n)$. We see that $f_n\left(\frac{1}{k_1}\right)=\frac{1}{k_1}$ and $\lim_{n\to\infty}f_n(\mu)=k_1$ for all other $\mu$. 

As the eigenvectors of $F(n)$ do not depend on $n$, we can find a transformation $T$ such that $F(n)=TE(n)T^{-1}$. The matrix $E(n)$ is diagonal with the first $m_1$ entries being $\frac{1}{k_1}$ followed by the other eigenvalues of $F(n)$. Using the definition \eqref{eq:defVmatrix} we see that $B_1(n)=T(E(n)^{-1}+E(n)+\lambda_1)T^{-1}$ which converges to 0 for $n\to\infty$ by the results from above and the definition of $k_1$. Similarly it follows that $\lim_{n\to\infty}A(n)=-\lim_{n\to\infty}TE(n)^{-\frac12}E(n+1)^{\frac12}T^{-1}=-\id$. 
This finishes the proof.
\end{proof}

We now aim to express $\sum_{n\in\Z}\tr\big(B_1(n)^2\big)$ in terms of $k_1$ and the potential $B$. To this end we recall \eqref{eq:defVmatrix} and compute that
\begin{align}
\begin{split}
\sum_{n\in\Z}\tr\big(B_1(n)^2\big)=
\sum_{n\in\Z}&\tr\left(\big(F(n)^{-\frac12}A(n)A(n)^*F(n)^{-\frac12}+F(n)\big)^2\right)\\
&+2\lambda_1\tr\left(F(n)^{-\frac12}A(n)A(n)^*F(n)^{-\frac12}+F(n)\right)+m\lambda_1^2\,.
\end{split}
\label{eq:V1squarematrix} 
\end{align}
Using \eqref{eq:riccatimatrix} and the cyclicity of the trace we can deduce that for any natural number $N$ the identity
\begin{align*}
&\sum_{n=-N}^N\tr\left(F(n)^{-\frac12}A(n)A(n)^*F(n)^{-\frac12}+F(n)\right)\\
&=\sum_{n=-N+1}^{N+1}\tr\big(A(n-1)^*F(n-1)^{-1}A(n-1)\big)+\sum_{n=-N}^{N}\tr\big(F(n)\big)\\
&=\tr\big(A(N)^*F(N)^{-1}A(N)\big)\!-\!\tr\big(A(-N-1)^*F(-N-1)^{-1}A(-N-1)\big)\!+\!\!\sum_{n=-N}^N\!\!\tr\big(B(n)-\lambda_1\big)
\end{align*}
holds.
We can compute the two terms outside the sum by choosing $N$ sufficiently large such that $A(N)=A(N-1)=-\id$. From \eqref{eq:Phiasymp} we then obtain  
\begin{align}
\sum_{n=-N}^N\!\!\tr\left(F(n)^{-\frac12}A(n)A(n)^*F(n)^{-\frac12}+F(n)\right)
\!=\tr\big(F(N)^{-1}\big)-m\frac{1}{k_1}+\!\sum_{n=-N}^N\!\!\tr\big(B(n)-\lambda_1\big).
\label{eq:1termsmatrix} 
\end{align}
It remains to treat the quadratic terms in \eqref{eq:V1squarematrix} and to this end we consider the square of equation \eqref{eq:riccatimatrix}. This allows us together with the cyclicity of the trace to conclude that
\begin{align*}
&\sum_{n=-N}^N \tr\left(\big(F(n)^{-\frac12}A(n)A(n)^*F(n)^{-\frac12}+F(n)\big)^2\right)\\
&=\sum_{n=-N}^N\tr\left(\big(A(n)^*F(n)^{-1}A(n)\big)^2\right)+2\tr\big(A(n)A(n)^*\big)+\tr\left(F(n)^2\right)\\
&=\sum_{n=-N+1}^{N+1}\tr\left(\big(A(n-1)^*F(n-1)^{-1}A(n-1)\big)^2\right)+\sum_{n=-N}^N\tr\left(F(n)^2\right)+2\tr\big(A(n)A(n)^*\big)\\
&=
\tr\left(\big(A(N)^*F(N)^{-1}A(N)\big)^2\right)-\tr\left(\big(A(-N-1)^*F(-N-1)^{-1}A(-N-1)\big)^2\right)\\
&\phantom{=}+\sum_{n=-N}^N\tr\left(\big(B(n)-\lambda_1\big)^2\right)-2\tr\big(A(n-1)^*F(n-1)^{-1}A(n-1)F(n)\big)+2\tr\big(A(n)A(n)^*\big).
\end{align*}
Using once again that $A(N)=A(-N-1)=-\id$  and \eqref{eq:Phiasymp} as well as \eqref{eq:defAmatrix} we arrive at
\begin{align*}
&\sum_{n=-N}^N \tr\left(\big(F(n)^{-\frac12}A(n)A(n)^*F(n)^{-\frac12}+F(n)\big)^2\right)\\
&=\tr\big(F(N)^{-2}\big)-m\frac{1}{k_1^2}+\sum_{n=-N}^N\!\tr\left(\big(B(n)-\lambda_1\big)^2\right)\!-2\tr\big(A_1(n-1)A_1(n-1)^*\big)\!+2\tr\big(A(n)A(n)^*\big).
\end{align*}
Together with \eqref{eq:V1squarematrix} and \eqref{eq:1termsmatrix} we obtain the following identity
\begin{align*}
\sum_{n=-N}^N\tr\big(B_1(n)^2\big)
=&\tr\big(F(N)^{-2}\big)-m\frac{1}{k_1^2}+2\lambda_1\left(\tr\big(F(N)^{-1}\big)-m\frac{1}{k_1}\right)\\
&+\sum_{n=-N}^N\tr\big(B(n)^2\big)
+2\sum_{n=-N}^N \tr\big(A(n)A(n)^*-A_1(n-1)A_1(n-1)^*\big).
\end{align*}
Recalling \eqref{eq:asympF1} and \eqref{eq:asympF2} as well as $\lambda_1=-k_1-\frac{1}{k_1}$ yields
\begin{align*}
\lim_{N\to\infty}\left(\tr\big(F(N)^{-2}\big)-m\frac{1}{k_1^2}+2\lambda_1\left(\tr\big(F(N)^{-1}\big)-m\frac{1}{k_1}\right)\right)=m_1\frac{1}{k_1^2}-m_1k_1^2
\end{align*}
and consequently we finally arrive at
\begin{align*}
\sum_{n\in\Z}\tr\big(B_1(n)^2\big)=m_1\frac{1}{k_1^2} -m_1k_1^2
+\sum_{n\in\Z}\tr\big(B(n)^2\big)
+2\sum_{n\in\Z} \tr\big(A(n)A(n)^*-A_1(n-1)A_1(n-1)^*\big).
\end{align*}
Since $\lim_{|n|\to\infty}A_1(n)=-\id$, we can replace $A_1(n-1)A_1(n-1)^*$ in the above identity by $A_1(n)A_1(n)^*$.
The off-diagonal entries $A_1(n)$ are linked to $A(n)$ via \eqref{eq:defAmatrix}. 
Thus we can compute that
\begin{align*}
\sum_{n=-N}^N\log\det\big(A_1(n)A_1(n)^*\big)
&=\sum_{n=-N}^N\log\det(A(n)F(n+1)A(n)^*F(n)^{-1}\big)\\
&=\sum_{n=-N}^N \log\det F(n+1)-\log\det F(n)+\log\det\big(A(n)A(n)^*\big)\\
&=\log\det F(N+1)-\log\det F(-N)+\sum_{n=-N}^N\log\det\big(A(n)A(n)^*\big).
\end{align*}
Letting $N\to\infty$ we obtain from \eqref{eq:Phiasymp} and Lemma \ref{lem:F} that
\begin{align*}
\sum_{n\in\Z}\log\det\big(A_1(n)A_1(n)^*\big)=-m_1\log|k_1|^2+\sum_{n\in\Z}\log\det\big(A(n)A(n)^*\big).
\end{align*}
We summarise our results in the following proposition.

\begin{proposition}\label{prop:matrix}
Let $W$ be the Jacobi operator \eqref{eq:defHmatrix} with $B(n)=0$ and $A(n)=-\id$ for sufficiently large $|n|$. Let $\lambda_1<\dots<\lambda_L<-2<2<\lambda_{L+1}<\dots<\lambda_M$ be the eigenvalues of $W$ each with multiplicity $m_j\le m$ and define $|k_j|>1 $ such that $\lambda_j=-k_j-\frac{1}{k_j}$. Then it holds that
\begin{align*}
\sum_{j=1}^M m_j\left(\frac{1}{k_j^2}-k_j^2+\log|k_j|^4\right) &+\sum_{n\in\Z}\tr\big(B(n)^2\big)-m_1\log|k_1|^4\\
&+2\sum_{n\in\Z} \tr\big(A(n)A(n)^*-A_1(n)A_1(n)^*\big) \\
&=\sum_{j=2}^M m_j\left(\frac{1}{k_j^2}-k_j^2+\log|k_j|^4\right)+\sum_{n\in\Z}\tr\big(B_1(n)^2\big)
\end{align*}
where the off-diagonal entries $A_1(n)$ and the potential $B_1(n)$ are defined as in \eqref{eq:defAmatrix} and \eqref{eq:defVmatrix}, respectively. It also holds that
\begin{align*}
\sum_{n\in\Z}\log\det\big(A_1(n)A_1(n)^*\big)=-m_1\log|k_1|^2+\sum_{n\in\Z}\log\det\big(A(n)A(n)^*\big)\,.
\end{align*}

\end{proposition}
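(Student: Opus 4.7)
The plan is to carry out a single matrix commutation step that eliminates the lowest eigenvalue $\lambda_1$, and then read both identities off the resulting algebra. I first introduce the matrix solution $\Phi$ of $(W-\lambda_1)\Phi = 0$ with asymptotics $\Phi(n) = k_1^n\id$ for $n < -n_{\max}$ and set $F(n) = -A(n)\Phi(n+1)\Phi(n)^{-1}$. By Lemma~\ref{lem:Phi}, $F(n)$ is invertible, Hermitian and positive, so $F(n)^{\pm 1/2}$ are well defined. The Riccati-type identity $A(n-1)^*F(n-1)^{-1}A(n-1) + F(n) = B(n) - \lambda_1$ then underlies the factorisation $D^*D = W - \lambda_1$ and $DD^* = W_1 - \lambda_1$, where $W_1$ has coefficients $A_1, B_1$ as in \eqref{eq:defAmatrix}--\eqref{eq:defVmatrix} and eigenvalues $\lambda_2, \dots, \lambda_M$ with multiplicities $m_2, \dots, m_M$.

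For the first identity, I would expand the square of $B_1(n) = F(n)^{-1/2} A(n) A(n)^* F(n)^{-1/2} + F(n) + \lambda_1$ and use cyclicity of the trace to split into linear and quadratic contributions in $F$. The linear piece $\tr\big(A(n)^*F(n)^{-1}A(n) + F(n)\big)$ telescopes under the shift $n \mapsto n-1$ by a single application of the Riccati equation, yielding a sum of $\tr(B(n)-\lambda_1)$ plus boundary values $\tr F(\pm N)^{-1}$. The quadratic piece is treated analogously by squaring the Riccati equation; the ensuing cross term $\tr\big(A(n-1)^*F(n-1)^{-1}A(n-1)\,F(n)\big)$ is precisely $\tr\big(A_1(n-1)A_1(n-1)^*\big)$ by the definition of $A_1$. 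Sending $N \to \infty$, the boundary expressions $\tr F(N)^{-k}$ are evaluated via Lemma~\ref{lem:F}, which isolates the contribution $m_1(1/k_1^2 - k_1^2)$ coming from the $m_1$-fold eigenvalue $1/k_1$ of $F(n)$ at $+\infty$. Replacing $A_1(n-1)A_1(n-1)^*$ by $A_1(n)A_1(n)^*$ using the exponential convergence $A_1(n) \to -\id$ from Lemma~\ref{lem:F} and rearranging the $\log|k_j|^4$ correction on both sides delivers the first claimed equality.

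For the determinant identity, I would exploit the multiplicative relation $A_1(n) = F(n)^{-1/2} A(n) F(n+1)^{1/2}$ directly, which implies $\log\det\big(A_1(n)A_1(n)^*\big) = \log\det\big(A(n)A(n)^*\big) + \log\det F(n+1) - \log\det F(n)$. Summing in $n$, the $F$-part telescopes to $\log\det F(N+1) - \log\det F(-N)$ in the limit. The left boundary is $m\log k_1$ since $F \equiv k_1\id$ there, while the right boundary equals $(m-2m_1)\log k_1$ by the eigenvalue structure from Lemma~\ref{lem:F}; the difference is exactly $-m_1\log|k_1|^2$, as desired.

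The main obstacle is the bookkeeping inside the quadratic trace calculation: one has to expand $B_1(n)^2$, apply cyclicity and two different forms of the Riccati equation, and carefully track the nonzero boundary terms $\tr F(\pm N)^{\pm k}$ before letting $N \to \infty$. Once both identities are available, the proposition follows from a purely algebraic rearrangement that isolates the $j=1$ term on both sides of the eigenvalue sum, leaving the common tail $\sum_{j=2}^M m_j(\cdot)$ to cancel.
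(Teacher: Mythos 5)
Your proposal reproduces the paper's argument essentially step for step: the same matrix solution $\Phi$ with the same left-asymptotics, the same positive factor $F(n)=-A(n)\Phi(n+1)\Phi(n)^{-1}$ invoking Lemma~\ref{lem:Phi}, the same Riccati identity, the same commutation factorisation $D^*D=W-\lambda_1$, $DD^*=W_1-\lambda_1$, the same trace expansion of $B_1(n)^2$ with telescoping under the Riccati equation and the boundary terms evaluated through Lemma~\ref{lem:F}, and the same telescoping $\log\det F$ argument for the determinant identity. The computation of the boundary limits ($m_1(1/k_1^2-k_1^2)$ for the trace and $-m_1\log|k_1|^2$ for the determinant) and the final algebraic rearrangement isolating the $j=1$ term all match the paper's proof.
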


Unfortunately we cannot repeat this procedure immediately as $B_1(n)$ is decaying exponentially fast but it does not necessarily vanish for sufficiently large $n$. Similarly $A_1(n)$ converges exponentially fast to $-\id$ but $A(n)=-\id$ may not hold for sufficiently large $n$. Thus we have to introduce a cut after which we set $A_1(n)$ to be $-\id$ and $B_1(n)$ to be zero.  With fixed $N^c>n_{\max}$ let the operator $W_1^c$ be defined as
\begin{align*}
(W_1^cu)(n)=A_1^c(n-1)^*u(n-1)+A_1^c(n)u(n)+B_1^c(n)u(n)
\end{align*}
where
\begin{align*}
A_1^c(n)= \begin{cases}
A_1(n)&,\, n<N^c\\
-\id&,\,n\ge N^c,
\end{cases}
&&
B_1^c(n)= \begin{cases}
B_1(n)&,\, n<N^c\\
0&,\,n\ge N^c.
\end{cases}
\end{align*}
The new Jacobi operator $W_1^c$ satisfies all the conditions of Proposition \ref{prop:matrix}. It has eigenvalues $\mu_j=-\ell_j^2-\frac{1}{\ell_j^2}$ with $|\ell_j|>1$ for $j\in\{2,\dots,M^c\}$ and corresponding multiplicities $n_j$ which are not necessarily the same as the eigenvalues and multiplicities of $W_1$. However, the first $m_2+\dots+m_M$ eigenvalues (counted with multiplicity), say $\mu_2,\dots,\mu_{M'}$ with $n_2+\dots+n_{M'}=m_2+\dots+m_M$, can be made arbitrarily close to the original eigenvalues $\lambda_2,\dots,\lambda_M$ (counted with multiplicity) by choosing $N^c$ sufficiently large. The potential additional eigenvalues $\mu_{M'+1},\dots,\mu_{M^c}$ can be made as close to the essential spectrum as we please. 
Proposition \ref{prop:matrix} now implies that
\begin{align*}
\sum_{j=2}^{M^c} n_j\left(\frac{1}{\ell_j^2}-\ell_j^2+\log|\ell_j|^4\right) &+\sum_{n\in\Z}\tr\big(B_1^c(n)^2\big)-n_2\log|\ell_2|^4\\
&+2\sum_{n\in\Z} \tr\big(A_1^c(n)A_1^c(n)^*-A_2(n)A_2(n)^*\big) \\
&=\sum_{j=3}^{M^c} n_j\left(\frac{1}{\ell_j^2}-\ell_j^2+\log|\ell_j|^4\right)+\sum_{n\in\Z}\tr\big(B_2(n)^2\big)
\end{align*}
and also
\begin{align*}
\sum_{n\in\Z}\log\det\big(A_2(n)A_2(n)^*\big)=-n_2\log|\ell_1|^2-\sum_{n\in\Z}\log\det\big(A_1^c(n)A_1^c(n)^*\big)\,.
\end{align*}
As a consequence we obtain that
\begin{align*}
\sum_{j=1}^M m_j\left(\frac{1}{k_j^2}-k_j^2+\log|k_j|^4\right) &+\sum_{n\in\Z}\tr\big(B(n)^2\big)
-m_1\log|k_1|^4-n_2\log|\ell_2|^4\\
&+2\sum_{n\in\Z} \tr\big(A(n)A(n)^*-A_2(n)A_2(n)^*\big) \\
&=e_2+\sum_{j=3}^{M'} n_j\left(\frac{1}{\ell_j^2}-\ell_j^2+\log|\ell_j|^4\right)+\sum_{n\in\Z}\tr\big(B_2(n)^2\big)
\end{align*} 
with an error $e_1$ stemming from replacing the matrix functions $B_1^c$ and $A_1^c$ by $B_1$ and $A_1$, respectively, and substituting $M'$ for $M^c$. Note that the total number of eigenvalues (counted with multiplicities) over which we sum on the right-hand-side is $m_2+\dots+m_M-n_2$ and thus strictly smaller than $m_2+\dots+m_M$.
It also holds that
\begin{align*}
\sum_{n\in\Z}\log\det\big(A_2(n)A_2(n)^*\big)=&f_2-m_1\log|k_1|^2-n_2\log|\ell_2|^2
+\sum_{n\in\Z}\log\det\big(A(n)A(n)^*\big)
\end{align*}
with an error $f_2$ correcting the replacement of $A_1^c$ by $A_2$. We can now introduce a cut in $A_2$ and $B_2$ and repeat the above steps to eliminate all the positive eigenvalues. For the negative eigenvalues we proceed as in the scalar case, using the unitary transformation $(Uu)(n)=(-1)^{n}u(n)$.  
After a finite number of $s\le m_1+\dots+m_M$ steps, we arrive at  
\begin{align}
\begin{split}
\sum_{j=1}^M m_j\!\left(\frac{1}{k_j^2}-k_j^2+\log|k_j|^4\right) &\!+\sum_{n\in\Z}\tr\big(B(n)^2\big)\!-m_1\log|k_1|^4\!-n_2\log|\ell_2|^4\!-\dots-q_s\log|p_s|^4\\
&+2\sum_{n\in\Z} \tr\big(A(n)A(n)^*-A_s(n)A_s(n)^*\big)\\
&=e_2+\dots +e_s+\sum_{n\in\Z}\tr\big(B_s(n)^2\big)\\
&\ge e_2+\dots+e_s
\end{split}
\label{eq:finalmatrix2} 
\end{align}
with
\begin{align}
\begin{split}
\sum_{n\in\Z}\log\det\big(A_s(n)A_s(n)^*\big)=
f_2+\dots+f_s&-m_1\log|k_1|^2-n_2\log|\ell_2|^2-\dots-q_s\log|p_s|^2\\
&+\sum_{n\in\Z}\log\det\big(A(n)A(n)^*\big).
\end{split}
\label{eq:aNamatrix} 
\end{align}
Fix $n\in\N$ and note that $A_s(n)A_s(n)^*$ is Hermitian, positive semi-definite and invertible. Let $\mu_1,\dots,\mu_m$ denote the eigenvalues of this matrix, which are necessarily real and positive. Then, using $\log x\le x-1$ for $x\in (0,\infty)$, we obtain that
\begin{align*}
\tr\big(\id-A_s(n)A_s(n)^*\big)=\sum_{j=1}^m 1-\mu_j
\le-\sum_{j=1}^m \log \mu_j
=-\log\det\big(A_s(n)A_s(n)^*\big).
\end{align*}
This allows us to conclude from \eqref{eq:finalmatrix2} and \eqref{eq:aNamatrix} that
\begin{align*}
\sum_{j=1}^M& m_j\left(\frac{1}{k_j^2}-k_j^2+\log|k_j|^4\right) +\sum_{n\in\Z}\tr\big(B(n)^2\big)
+2\sum_{n\in\Z}\tr\big(A(n)A(n)^*-\id\big)-\log\det\big(A(n)A(n)^*\big)\\
&\ge e_2+\dots+e_s+ f_2+\dots+f_s\,.
\end{align*}
All the errors $e_2,\dots,e_s$ and $f_2,\dots,f_s$ can be made arbitrarily small such that we get the desired result \eqref{eq:finalmatrix}.
Using an approximation argument, this result extends to Jacobi operators $W$ which only satisfy that the right-hand-side of \eqref{eq:finalmatrix} is finite. 


\section{The inequality is sharp}\label{sec:sharp}
We aim to show that inequality \eqref{eq:final} is sharp.
In \cite{Spiridonov1995} (see also \cite{Spiridonov1995b}) reflectionless potentials for the Jacobi operator were constructed using Darboux transformations. The obtained operators are of the form 
\begin{align*}
(\widetilde{W}u)(n)=\widetilde{a}(n+1)u(n+1)+u(n-1)+\widetilde{b}(n)u(n)
\end{align*}
which can easily be transformed to the Hermitian form \eqref{eq:defH}. Let $\omega>0$ and define $c_n:=\cosh(\omega n)$. 
The results of Spiridonov and Zhedanov let us conclude that the Jacobi operator $W_1$ of the form \eqref{eq:defH} with potential
\begin{align*}
b_1(n)=\frac{c_n}{c_{n+1}}-\frac{c_{n-1}}{c_{n}}
\end{align*}
and off-diagonal entries 
\begin{align*}
a_1(n)=-\frac{\sqrt{c_nc_{n+2}}}{c_{n+1}}
\end{align*}
is reflectionless and has one eigenvalue $\lambda_1=-2\cosh(\omega)<-2$. Since $k_1=\e^\omega$ we observe that
\begin{align}
k_1^2-\frac{1}{k_1^2}-\log|k_1|^4=\e^{2\omega}-\e^{-2\omega}-4\omega
\label{eq:sharp1k} 
\end{align}
and it remains to prove that the right-hand-side of \eqref{eq:final} converges to this value. Note that $W_1$ can also be obtained by applying the commutation method to the free discrete Schr{\"o}dinger operator $W_0$ with $a\equiv-1$ and $b\equiv0$. To this end we observe that $\varphi(n)=\cosh(\omega n)\notin\ell^2(\Z)$ is a solution to the equation $W_0\varphi=\lambda_1\varphi$. As shown by Gesztesy and Teschl \cite{Gesztesy1996}, the Jacobi operator with potential given by \eqref{eq:defV1} and off diagonal entries \eqref{eq:defa1} now has exactly one eigenvalue $\lambda_1$. It is easy to check that this operator is precisely the operator $W_1$ since
\begin{align*}
b_1(n)=\frac{c_n}{c_{n+1}}-\frac{c_{n-1}}{c_{n}}=\frac{c_n}{c_{n+1}}+\frac{c_{n+1}}{c_n}+\lambda_1\,.
\end{align*}
Note that the difference to our previous work is that here we have started with a real number $\lambda_1$ which is no eigenvalue of $W_0$ and have created an operator $W_1$ that has exactly one eigenvalue $\lambda_1$. In the previous sections we started with the ground state of a Jacobi operator and eliminated this point from the spectrum by means of exactly the same method. 

Using \eqref{eq:V1square} and repeating the calculations thereafter together with the facts that $\lim_{N\to\infty}\frac{c_N}{c_{N+1}}=\e^{-\omega}$ and $\lim_{N\to\infty}\frac{c_{-N-1}}{c_{-N}}=\e^\omega$, we can show that
\begin{align}
\sum_{n\in\Z}b_1(n)^2=-\e^{-2\omega}+\e^{2\omega}+2\sum_{n\in\Z}(1-a_1(n)^2)\,.
\label{eq:asquare1} 
\end{align}
It is easy to see that $\sum_{n\in\Z}\log a_1(n)^2$ is a telescoping series such that
\begin{align*}
\sum_{n=-N}^N\log a_1(n)^2
=\sum_{n=-N}^N\log\left(\frac{c_n}{c_{n+1}}\right)-\log\left(\frac{c_{n+1}}{c_{n+2}}\right)
=\log\left(\frac{c_{-N}}{c_{-N+1}}\right)-\log\left(\frac{c_{N+1}}{c_{N+2}}\right)
\end{align*}
which converges to $2\omega$ as $N\to\infty$. Thus
\begin{align*}
2\sum_{n\in\Z}\log a_1(n)^2=4\omega
\end{align*}
and together with \eqref{eq:sharp1k} and \eqref{eq:asquare1} we can conclude that for this special reflectionless operator $W_1$ there is equality in \eqref{eq:final}. 

This result points out an important similarity between \eqref{eq:final} and the Lieb--Thirring inequality on the continuum for the power $\gamma=\frac32$. Both inequalities are sharp for reflectionless potentials.  


\section{Inequalities for higher powers of eigenvalues and potential}\label{sec:higher}

Adapting the Aizenman--Lieb principle \cite{Aizenman1978}, we aim to prove spectral inequalities which depend on higher powers of the eigenvalues $\lambda_j$ and the potential $B$. In this section we shall restrict ourselves to the case of a discrete Schr{\"o}dinger operator such that $A(n)=-\id$ for all $n\in\Z$. Let $\lambda_j^+$ for $j=1,\dots,M^+$ denote the eigenvalues of $W$  that are larger than 2 and let $1\le m_j^+\le m$ denote their respective multiplicities. In complete analogy we define $\lambda_j^-$ for $j=1,\dots,M^-$ as the eigenvalues which are smaller than $-2$ and denote their multiplicities by $m_j^-$. In the following computations we will at times consider the positive/negative eigenvalues of a discrete Schr{\"o}dinger operator $W$ with potential $B'$ different to $B$. Whenever we do so, we will denote these eigenvalues by $\lambda_j^\pm(B')$ to make a clear distinction.  

In \cite{Hundertmark2002} Hundertmark and Simon showed that the inequalities
\begin{align}
\sum_{j=1}^{M^\pm}m_j^\pm\big((\lambda_j^\pm)^2-4\big)^{\frac12}\le\sum_{n\in\Z}\tr\big(B(n)_\pm\big)
\label{eq:order1+} 
\end{align}
hold. 
As noted in \cite{Simon2011}, a straightforward calculation shows that
\begin{align}
\frac{1}{2}\left(k^2-\frac{1}{k^2}-\log|k|^4\right)=\int_2^{|\lambda|} (E^2-4)^{\frac12}\dd E
\label{eq:linkG32} 
\end{align}
for $-k-\frac{1}{k}=\lambda\in\R\setminus[-2,2]$ with $|k|>1$.
This raises the question of whether we can derive \eqref{eq:finalmatrix1} from the result by Hundertmark and Simon \eqref{eq:order1+} and whether we can then obtain similar inequalities for higher powers of the eigenvalues by considering iterated integrals of $(E^2-4)^{\frac12}$. To this end, we compute that for $\gamma>\frac12$ 
\begin{align*}
\sum_{j=1}^{M^+}m_j^+\int_{2}^{\lambda_j^+}(E^2-4)^{\frac12}(\lambda_j^+-E)^{\gamma-\frac32}\dd E
&=\int_{0}^{\infty}\sum_{j=1}^{M^+}m_j^+\big((\lambda_j^+-t)^2-4\big)^{\frac12}t^{\gamma-\frac32}\chi_{\{\lambda_j^+-t\ge2\}}(t)\dd t\\
&=\int_{0}^{\infty}\sum_{j=1}^{M^+}m_j^+\big(\lambda_j^+(B-t)^2-4\big)^{\frac12}t^{\gamma-\frac32}\chi_{\{\lambda_j^+(B-t)\ge2\}}(t)\dd t
\end{align*}
where we have used the substitution $t=\lambda_j^+-E$. Note that by the Min-Max principle $\lambda_j^+(B-t)\le\lambda_j^+\big((B-t)_+\big)$ and thus we obtain from \eqref{eq:order1+} that
\begin{align*}
\sum_{j=1}^{M^+}m_j^+\int_{2}^{\lambda_j^+}(E^2-4)^{\frac12}(\lambda_j^+-E)^{\gamma-\frac32}\dd E\le\sum_{n\in\Z}\int_{0}^{\infty}t^{\gamma-\frac32}\tr\big((B(n)-t)_+\big)\dd t\,.
\end{align*}
Let $\mu_1(n),\dots,\mu_\ell(n)$  with $\ell(n)\le m$ be the eigenvalues of $B(n)$ which are positive. We observe that for fixed $n\in\Z$
\begin{align*}
\int_{0}^{\infty}t^{\gamma-\frac32}\tr\big((B(n)-t)_+\big)\dd t
=\sum_{k=1}^{\ell(n)}\int_0^\infty t^{\gamma-\frac32} \big(\mu_k(n)-t\big)_+\dd t
&=\sum_{k=1}^{\ell(n)}\mu_k(n)^{\gamma+\frac12}\int_0^1 s^{\gamma-\frac32}(1-s)\dd s\\&
=\B\left(\gamma-\frac12,2\right)\tr\left(B(n)_+^{\gamma+\frac12}\right)
\end{align*}
with the Beta function $\B(x,y)=\frac{\Gamma(x)\Gamma(y)}{\Gamma(x+y)}$ and the simplified notation $B(n)_+^{\gamma+1/2}=\big(B(n)_+\big)^{\gamma+1/2}$. 
As a consequence it holds that
\begin{align*}
\sum_{j=1}^{M^+}m_j^+\int_{2}^{\lambda_j^+}(E^2-4)^{\frac12}(\lambda_j^+-E)^{\gamma-\frac32}\dd E
\le \B\left(\gamma-\frac12,2\right)\sum_{n\in\Z}\tr\left(B(n)_+^{\gamma+\frac12}\right)
\end{align*}
and in complete analogy it is possible to show that
\begin{align*}
\sum_{j=1}^{M^-}m_j^-\int_{2}^{|\lambda_j^-|}(E^2-4)^{\frac12}(|\lambda_j^-|-E)^{\gamma-\frac32}\dd E
\le \B\left(\gamma-\frac12,2\right)\sum_{n\in\Z}\tr\left(B(n)_-^{\gamma+\frac12}\right).
\end{align*}
Together these two results prove the following theorem.

\begin{theorem}
Let $W$ be the discrete Schr{\"o}dinger operator \eqref{eq:defHmatrix} with $A(n)=-\id$ for all $n\in\Z$. Enumerate the eigenvalues outside of $[-2,2]$ as $|\lambda_1|\ge|\lambda_2|\ge\dots\ge2$ with multiplicities $m_j\le m$. Then it holds that for $\gamma>\frac12$
\begin{align}
\sum_{j} m_j\int_{2}^{|\lambda_j|}(E^2-4)^{\frac12}(|\lambda_j|-E)^{\gamma-\frac32}\dd E
\le \B\left(\gamma-\frac12,2\right)\sum_{n\in\Z}\tr \left(|B(n)|^{\gamma+\frac12}\right).
\label{eq:orderalpha} 
\end{align}
provided that the right-hand-side is finite. 
Defining $|k_j|\ge1$ via $\lambda_j=-k_j-\frac{1}{k_j}$, this inequality coincides for $\gamma=\frac32$ by \eqref{eq:linkG32} with \eqref{eq:finalmatrix1}. Exemplarily for  $\gamma=\frac52$ and $\gamma=\frac72$ the inequality can be written as
\begin{align*}
\sum_{j}m_j \left(\frac32\left(|k_j|-\frac{1}{|k_j|}\right)+\frac16\left(|k_j|^3-\frac{1}{|k_j|^3}\right)-\log|k_j|^2\left(|k_j|+\frac{1}{|k_j|}\right)\right)
&\le\frac16\sum_{n\in\Z}\tr \left(|B(n)|^{3}\right),\\
\sum_{j}m_j \left(\frac{1}{12}\left(k_j^4-\frac{1}{k_j^4}\right)+\frac73\left(k_j^2-\frac{1}{k_j^2}\right)
-\log|k_j|^2\left(k_j^2+\frac{1}{k_j^2}\right)-\log|k_j|^6\right)
&\le\frac{1}{12}\sum_{n\in\Z}\tr \left(|B(n)|^{4}\right).
\end{align*}

\end{theorem}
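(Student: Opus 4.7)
The plan is to lift the Hundertmark--Simon inequality \eqref{eq:order1+} to higher powers via an Aizenman--Lieb-type integration argument, treating the positive and negative eigenvalues of $W$ separately and adding the two bounds at the end. For the positive part I would introduce the weighted integral
\begin{align*}
S^+:=\sum_{j=1}^{M^+}m_j^+\int_{2}^{\lambda_j^+}(E^2-4)^{\frac12}(\lambda_j^+-E)^{\gamma-\frac32}\dd E
\end{align*}
and use the substitution $t=\lambda_j^+-E$ to rewrite it as an integral over $t\in[0,\infty)$ of $t^{\gamma-3/2}$ against the positive-eigenvalue sum $\sum_j m_j^+\big((\lambda_j^+-t)^2-4\big)^{1/2}$, restricted by $\chi_{\{\lambda_j^+-t\ge2\}}$. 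The point of this shift is that $\lambda_j^+-t$ is the $j$-th positive eigenvalue of the discrete Schr{\"o}dinger operator with potential $B-t\id$, and by the Min--Max principle it is dominated by the corresponding eigenvalue of the operator with potential $(B-t\id)_+$.

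Applying the Hundertmark--Simon bound \eqref{eq:order1+} to the operator with potential $(B-t\id)_+$ therefore controls the inner sum by $\sum_{n\in\Z}\tr\big((B(n)-t\id)_+\big)$. After a Fubini interchange, justified by the integrability of $t^{\gamma-3/2}(\mu-t)_+$ at the origin (this is exactly why we need $\gamma>\frac12$) together with the finiteness of the right-hand side, I would diagonalise $B(n)_+$ and compute pointwise
\begin{align*}
\int_0^\infty t^{\gamma-\frac32}(\mu-t)_+\dd t=\mu^{\gamma+\frac12}\int_0^1 s^{\gamma-\frac32}(1-s)\dd s=\mu^{\gamma+\frac12}\,\B\!\left(\gamma-\tfrac12,2\right).
\end{align*}
Summing over eigenvalues of $B(n)_+$ yields $\B(\gamma-\frac12,2)\tr\big(B(n)_+^{\gamma+1/2}\big)$. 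The negative half is handled by replacing $B$ with $-B$ (whose positive eigenvalues are the negatives of the original negative ones), producing the corresponding bound with $B(n)_-^{\gamma+1/2}$; adding the two inequalities gives \eqref{eq:orderalpha} since $|B|^{\gamma+1/2}=B_+^{\gamma+1/2}+B_-^{\gamma+1/2}$.

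For the explicit formulas at $\gamma=\frac52$ and $\gamma=\frac72$, the inner integral is a standard elementary evaluation: the substitution $E=k+\frac1k$ with $k>1$ gives $(E^2-4)^{1/2}=k-\frac1k$ and $\dd E=(1-\frac1{k^2})\dd k$, which reduces everything to integrating polynomials in $k$ and $\frac1k$ together with a single logarithmic derivative term, yielding precisely the polynomial-plus-logarithm combinations stated. The $\gamma=\frac32$ case reproduces \eqref{eq:finalmatrix1} via the identity \eqref{eq:linkG32}. The main obstacle I expect is careful bookkeeping in the Min--Max step --- in particular that the $j$-th positive eigenvalue of the shifted operator $W-t$ (and its multiplicity) is correctly dominated by the $j$-th positive eigenvalue of the operator with potential $(B-t\id)_+$ uniformly in $t\ge 0$, so that \eqref{eq:order1+} can be invoked for a.e.\ $t$ before integrating. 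Everything else is rearrangement and an elementary Beta-function calculation.
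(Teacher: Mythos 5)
Your proposal is correct and follows essentially the same route as the paper: substitution $t=|\lambda_j|-E$, identification of the shifted eigenvalue as an eigenvalue of the operator with potential $B-t\id$, Min--Max comparison to the potential $(B-t\id)_+$ followed by the Hundertmark--Simon bound \eqref{eq:order1+}, Fubini, and the Beta-function evaluation, with the positive and negative spectral halves treated symmetrically and added. Your remark that $\gamma>\frac12$ is precisely what makes $t^{\gamma-3/2}$ integrable at the origin is the correct justification and matches the role that condition plays in the paper's argument.
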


For half-integers values $\gamma\in\N+\frac12$ the substitution $E=t+1/t$ allows us to explicitly compute the left-hand-side of \eqref{eq:orderalpha} in terms of  $k_j$ and we observe the emergence of logarithmic terms of $k_j$. In these cases it is also possible to compute the left-hand-side of the inequality in terms of $\lambda_j$ and exemplarily for $\gamma=\frac32,\frac52$ we get 
\begin{align*}
\sum_{j}m_j\!
\left(\!2\log2+\frac12|\lambda_j|(\lambda_j^2-4)^\frac12 -2\log\big(|\lambda_j|+(\lambda_j^2-4)^\frac12\big)\!\right)
\!&\le\frac12\sum_{n\in\Z}\tr\left(|B(n)|^2\right),\\
\sum_{j}m_j\!
\left(\!2|\lambda_j|\log2+\frac12\lambda_j^2(\lambda_j^2-4)^\frac12\! -2|\lambda_j|\log\big(|\lambda_j|+(\lambda_j^2-4)^\frac12\big)\!-\frac13(\lambda_j^2-4)^\frac32\!\right)
\!&\le\frac16\sum_{n\in\Z}\tr\left(|B(n)|^3\right).
\end{align*}

In the scalar case $m=1$ we can use the same method as in \cite{Hundertmark2002} to pass from inequalities in the case $a\equiv-1$ to the general case where $a(n)\to-1$ as $|n|\to\infty$. This yields the following result.

\begin{theorem}
Let $W$ be the scalar Jacobi operator \eqref{eq:defH} with $\lim_{|n|\to\infty}a(n)=-1$. Enumerate the eigenvalues outside of $[-2,2]$ as $|\lambda_1|\ge|\lambda_2|\ge\dots\ge2$ . Then it holds that for  $\gamma>\frac12$
\begin{align*}
\sum_{j} \int_2^{|\lambda_j|}\!(E^2-4)^{\frac12}(|\lambda_j|-E)^{\gamma-\frac32}\dd E 
\le 3^{\gamma-\frac12} B\!\left(\gamma-\frac12,2\right)\!\!\left(\sum_{n\in\Z}|b(n)|^{\gamma+\frac12}+4\sum_{n\in\Z}|a(n)+1|^{\gamma+\frac12}\right)\!.
\end{align*}

\end{theorem}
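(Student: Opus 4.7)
My plan is to reduce the general Jacobi case to the discrete Schr\"odinger setting $a\equiv -1$ covered by the previous theorem, via a quadratic-form perturbation argument in the spirit of Hundertmark and Simon. First I would decompose $W=W_0+b+R$, where $W_0$ is the free discrete Schr\"odinger operator with $a\equiv -1$ and $b\equiv 0$, the second summand denotes multiplication by the potential, and $R$ is the off-diagonal correction generated by $c(n):=a(n)+1$. A direct computation of the quadratic form gives
\[
\langle R\psi,\psi\rangle = 2\sum_{n\in\Z} c(n)\,\Re\bigl(\psi(n)\overline{\psi(n+1)}\bigr),
\]
so the elementary bound $2|\Re(\psi(n)\overline{\psi(n+1)})|\le |\psi(n)|^2+|\psi(n+1)|^2$ yields a two-sided form inequality $-V\le R\le V$, where $V$ is multiplication by the non-negative function $V(n):=|a(n)+1|+|a(n-1)+1|$. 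Combined with $-b_-\le b\le b_+$, this produces the comparison
\[
W_0 - V^{(-)} \le W \le W_0 + V^{(+)},\qquad V^{(\pm)}(n):=b(n)_\pm + |a(n)+1|+|a(n-1)+1|.
\]

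By the Min-Max principle each eigenvalue $\lambda_j>2$ of $W$ is bounded above by the corresponding eigenvalue of the discrete Schr\"odinger operator $W_0+V^{(+)}$, while for each eigenvalue $\lambda_j<-2$ we have $|\lambda_j|\le|\lambda_j(W_0-V^{(-)})|$. Since the map $\lambda\mapsto \int_2^{|\lambda|}(E^2-4)^{\frac12}(|\lambda|-E)^{\gamma-\frac32}\,\dd E$ is monotone increasing in $|\lambda|\ge 2$ (seen e.g.\ by the substitution $u=|\lambda|-E$), the left-hand-side of the target inequality for $W$ is dominated by the sum of the corresponding quantities for $W_0+V^{(+)}$ and $W_0-V^{(-)}$.

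Applying the previous theorem (in the scalar case $m=1$) to each of these two auxiliary discrete Schr\"odinger operators and summing gives the upper bound
\[
\B\!\left(\gamma-\tfrac12,\,2\right)\sum_{n\in\Z}\Bigl[V^{(+)}(n)^{\gamma+\frac12}+V^{(-)}(n)^{\gamma+\frac12}\Bigr].
\]
Setting $p:=\gamma+\frac12\ge 1$, the convexity estimate $(x+y+z)^p\le 3^{p-1}(x^p+y^p+z^p)$, the disjoint-support identity $b(n)_+^p+b(n)_-^p=|b(n)|^p$, and the reindexing $\sum_n|a(n-1)+1|^p=\sum_n|a(n)+1|^p$ together yield the announced constant $3^{\gamma-\frac12}\B(\gamma-\frac12,2)$, the coefficient $1$ in front of $\sum_n|b(n)|^{\gamma+\frac12}$, and the coefficient $4$ in front of $\sum_n|a(n)+1|^{\gamma+\frac12}$: the factor $2$ produced by the two shifted copies $|a(n)+1|^p+|a(n-1)+1|^p$ is doubled once more upon adding the contributions from the positive and negative sides.

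The main technical care rather than obstacle lies in the Min-Max comparison: since the auxiliary operators $W_0\pm V^{(\pm)}$ are compact perturbations of $W_0$ under the finiteness hypothesis on the right-hand-side, $W$ has no more eigenvalues above $2$ (respectively below $-2$) than the corresponding comparison operator, and the inequality on each matched eigenvalue transfers cleanly. The coefficients $1$ and $4$ recover the optimal Hundertmark--Simon constants of \eqref{eq:hsmain} in the limiting case $\gamma=\frac12$, where $3^{\gamma-\frac12}=1$.
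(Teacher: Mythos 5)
Your proposal is correct and is exactly the Hundertmark--Simon perturbation argument that the paper invokes (the paper gives no details, only the citation to \cite{Hundertmark2002}): comparing $W$ in quadratic form with discrete Schr\"odinger operators $W_0\pm V^{(\pm)}$ via the Cauchy--Schwarz bound on the off-diagonal part, applying the $a\equiv-1$ theorem separately to the positive and negative eigenvalue branches, and then using the power-mean inequality to extract the factor $3^{\gamma-\frac12}$ and the coefficients $1$ and $4$. The accounting of multiplicities through the Min--Max comparison and the bookkeeping $b_+^p+b_-^p=|b|^p$, $\sum_n|a(n-1)+1|^p=\sum_n|a(n)+1|^p$ is correct.
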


We will now show some general properties of the left-hand-side of \eqref{eq:orderalpha}. We are interested in its behaviour for large eigenvalues as well as for eigenvalues which are close to the essential spectrum. To this end we define the function $G_\gamma(\lambda)$ for $\lambda>2$  and $\gamma>\frac12$ as
\begin{align*}
G_\gamma(\lambda)=\int_{2}^{\lambda}(E^2-4)^{\frac12}(\lambda-E)^{\gamma-\frac32}\dd E
\end{align*}
which allows us write the left-hand-side of \eqref{eq:orderalpha} simply as $\sum_{j}m_j G_\gamma(|\lambda_j|)$. 

\begin{proposition}
For $\lambda\to2$ we observe the asymptotic behaviour
\begin{align}
G_\gamma(\lambda)=2 \B\left(\gamma-\frac12,\frac32\right)(\lambda-2)^{\gamma}
+\Oo\big((\lambda-2)^{\gamma+1}\big).
\label{eq:asymp2} 
\end{align}
It furthermore holds that
\begin{align}
G_\gamma(\lambda)\ge 2\B\left(\gamma-\frac12,\frac32\right)(\lambda-2)^{\gamma}
\label{eq:compare} 
\end{align}
as well as
\begin{align}
G_\gamma(\lambda)\ge \B\left(\gamma-\frac12,2\right)(\lambda-2)^{\gamma+\frac12}\,.
\label{eq:compare2} 
\end{align}
For $\lambda\to\infty$ the asymptotic behaviour
\begin{align}
G_\gamma(\lambda)=\B\left(\gamma-\frac12,2\right)\lambda^{\gamma+\frac12}+
\Oo\big(\lambda^{\gamma-\frac32}\big)
\label{eq:asympinfty} 
\end{align}
is satisfied and
\begin{align}
G_\gamma(\lambda)\le \B\left(\gamma-\frac12,2\right) \lambda^{\gamma+\frac12}\,.
\label{eq:boundinfty} 
\end{align}

\end{proposition}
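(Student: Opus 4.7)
The plan is to break the four claims into three natural groups: the near-threshold expansion and lower bound ((1) and (2)), the uniform pointwise lower bound (3), and the large-$\lambda$ expansion and upper bound ((4) and (5)). Each group rests on an elementary change of variable that converts $G_\gamma$ into a Beta integral plus an explicit remainder.

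For (1) and (2), I would substitute $E = 2 + (\lambda-2)s$ to obtain
\begin{align*}
G_\gamma(\lambda) = (\lambda-2)^\gamma \int_0^1 s^{\frac12}(1-s)^{\gamma-\frac32}\sqrt{4+(\lambda-2)s}\,\dd s.
\end{align*}
The inequality $\sqrt{4+(\lambda-2)s}\ge 2$ yields (2) at once, since the remaining integral equals $\B(\frac32,\gamma-\frac12)$. For the expansion (1), the identity $\sqrt{4+(\lambda-2)s}-2 = (\lambda-2)s/(\sqrt{4+(\lambda-2)s}+2)$ bounds the defect by $\tfrac14(\lambda-2)s$, so that the error contributes at most $\tfrac14(\lambda-2)^{\gamma+1}\int_0^1 s^{\frac32}(1-s)^{\gamma-\frac32}\dd s$, which converges for $\gamma > \frac12$.

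For (3), the key elementary step is $E+2 \ge E-2$ on $[2,\infty)$, giving $(E^2-4)^{\frac12} \ge E-2$; substituting $E = 2 + (\lambda-2)u$ then produces $G_\gamma(\lambda) \ge (\lambda-2)^{\gamma+\frac12}\B(2,\gamma-\frac12)$ as an exact Beta integral. For (4) and (5), I would instead use $E = \lambda u$, giving
\begin{align*}
G_\gamma(\lambda) = \lambda^{\gamma+\frac12}\int_{2/\lambda}^1 \sqrt{u^2-4/\lambda^2}\,(1-u)^{\gamma-\frac32}\,\dd u.
\end{align*}
The bound $\sqrt{u^2-4/\lambda^2}\le u$, combined with extending the integration down to $0$ (which increases the integral by $\Oo(\lambda^{-2})$), proves (5) with constant $\B(2,\gamma-\frac12)$. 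The asymptotic (4) then follows by estimating the square-root defect $u - \sqrt{u^2-4/\lambda^2} = (4/\lambda^2)/(u + \sqrt{u^2-4/\lambda^2}) \le 4/(\lambda^2 u)$, which integrated against $(1-u)^{\gamma-\frac32}$ contributes the leading error.

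The main technical point will be controlling the remainder in (4): the pointwise bound $4/(\lambda^2 u)$ produces a logarithmic divergence at $u=0$ of the form $\lambda^{-2}\log\lambda$, which after multiplication by $\lambda^{\gamma+\frac12}$ yields a remainder of size $\lambda^{\gamma-\frac32}\log\lambda$. The explicit case $\gamma=\frac32$, where $G_{3/2}(\lambda) = \frac{\lambda}{2}\sqrt{\lambda^2-4} - 2\cosh^{-1}(\lambda/2) = \frac{\lambda^2}{2} - 2\log\lambda + \Oo(1)$, shows the log cannot be removed by a more careful estimate. In my write-up I would therefore state the asymptotic explicitly with this logarithmic correction, which may then be absorbed into $\Oo(\lambda^{\gamma-\frac32+\eps})$ for any $\eps > 0$ if a pure power-law formulation is required.
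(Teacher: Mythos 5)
Your approach is essentially the same as the paper's: the substitution $E=2+(\lambda-2)s$ for the threshold regime (the paper writes $t=(E-2)/(\lambda-2)$, which is identical), the elementary inequalities $E^2-4\ge 4(E-2)$ and $E^2-4\ge(E-2)^2$ for \eqref{eq:compare} and \eqref{eq:compare2}, and the substitution $E=\lambda u$ for the large-$\lambda$ regime. Your factoring of $(E^2-4)^{1/2}$ into $(E-2)^{1/2}(E+2)^{1/2}$ before substituting is a small stylistic refinement that makes the bookkeeping of powers cleaner and gives the explicit error bound $\tfrac14(\lambda-2)^{\gamma+1}\B(\tfrac52,\gamma-\tfrac12)$ in \eqref{eq:asymp2}, which is sharper than what the paper spells out.

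More importantly, you have caught a genuine error in the statement of \eqref{eq:asympinfty}. Your computation is correct: with $E=\lambda u$ one has
\begin{align*}
\B\Big(\gamma-\tfrac12,2\Big)\lambda^{\gamma+\frac12}-G_\gamma(\lambda)
=\lambda^{\gamma+\frac12}\left[\int_0^{2/\lambda}u(1-u)^{\gamma-\frac32}\dd u
+\int_{2/\lambda}^1\frac{4/\lambda^2}{u+\sqrt{u^2-4/\lambda^2}}(1-u)^{\gamma-\frac32}\dd u\right],
\end{align*}
and the second integral is of order $\lambda^{-2}\log\lambda$, not $\lambda^{-2}$, because the integrand behaves like $2/(\lambda^2 u)$ for $u$ well away from $2/\lambda$. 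Multiplied by $\lambda^{\gamma+\frac12}$ this gives a remainder of true order $\lambda^{\gamma-\frac32}\log\lambda$. Your explicit check at $\gamma=\tfrac32$,
\begin{align*}
G_{3/2}(\lambda)=\frac{\lambda}{2}\sqrt{\lambda^2-4}-2\log\Big(\lambda+\sqrt{\lambda^2-4}\Big)+2\log 2
=\frac{\lambda^2}{2}-2\log\lambda-1+\Oo(\lambda^{-2}),
\end{align*}
confirms that the logarithm is genuinely present, since $\B(1,2)\lambda^2=\tfrac12\lambda^2$ and the claimed $\Oo(\lambda^0)$ remainder would forbid the $-2\log\lambda$ term. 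So the paper's \eqref{eq:asympinfty} should read $\Oo(\lambda^{\gamma-\frac32}\log\lambda)$, or $\Oo(\lambda^{\gamma-\frac32+\eps})$ for any $\eps>0$ as you suggest. The correction is harmless for the paper's subsequent use of \eqref{eq:asympinfty}, which only requires the leading-order ratio $G_\gamma(\lambda)/\lambda^{\gamma+\frac12}\to\B(\gamma-\tfrac12,2)$, but the stated error term is indeed wrong as written. Your write-up should state and prove the corrected version.
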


\begin{proof}
Substituting $t=(E-2)/(\lambda-2)$ in the definition of $G_\gamma$ shows that
\begin{align*}
G_\gamma(\lambda)
=(\lambda-2)^{\gamma-\frac12}\int_0^1\left(\big(t(\lambda-2)+2\big)^2-4\right)^\frac12(1-t)^{\gamma-\frac32}
\dd t
\end{align*} 
and the claimed asymptotic behaviour for $\lambda\to2$ now follows from
\begin{align*}
\left(\big(t(\lambda-2)+2\big)^2-4\right)^\frac12=2(\lambda-2)^\frac12t^\frac12+\Oo\big((\lambda-2)^{\frac32}\big).
\end{align*}
Using the fact that $(E^2-4)\ge 4(E-2)$ and the same substitution as above we can furthermore compute that
\begin{align*}
G_\gamma(\lambda)\ge2\int_2^\lambda (E-2)^\frac12(\lambda-E)^{\gamma-\frac32}\dd E
=(\lambda-2)^{\gamma}\int_0^1t^\frac12 (1-t)^{\gamma-\frac32}\dd t
\end{align*}
which is precisely \eqref{eq:compare}. Similarly we can apply the inequality $(E^2-4)\ge (E^2-2)^2$ and the above substitution to establish that
\begin{align*}
G_\gamma(\lambda)\ge\int_2^\lambda (E^2-2)(\lambda-E)^{\gamma-\frac32}\dd E
=(\lambda-2)^{\gamma+\frac12}\int_0^1 t (1-t)^{\gamma-\frac32}\dd t
\end{align*} 
which proves \eqref{eq:compare2}.
Substituting $t=E\lambda$ in the definition of $G_\gamma$ allows us to conclude that
\begin{align*}
G_\gamma(\lambda)
=\lambda^{\gamma+\frac12}\int_{\frac{2}{\lambda}}^1\left(t^2-\frac{4}{\lambda^2}\right)^{\frac12}(1-t)^{\gamma-\frac32}\dd t
\end{align*}
which can be used to show both \eqref{eq:boundinfty} and  \eqref{eq:asympinfty}.
\end{proof}

The asymptotic behaviour \eqref{eq:asympinfty} proves that the inequalities \eqref{eq:orderalpha} are optimal for large coupling. For a scalar potential $b$ we define a reordering $b^\pm(n)$ such that $b^+(1)\ge b^+(2)\ge\dots\ge0$ and $b^-(1)\le b^-(2)\le\dots\le0$. It then holds that
\begin{align*}
\lim_{\eta\to\infty}\frac1\eta \lambda_n^\pm(\eta b)=b^\pm(n)
\end{align*}
and consequently the ratio of the left-hand side and the right-hand-side of \eqref{eq:orderalpha} for potential $\eta b$ converges to 1 as $\eta\to\infty$. 

Inequalities \eqref{eq:compare} and \eqref{eq:compare2} allow us to compare our spectral inequality \eqref{eq:orderalpha} to the results of Hundertmark and Simon \cite{Hundertmark2002}. In their paper it is proven that in the case $A\equiv-\id$ the inequality
\begin{align}
\sum_{j} m_j\big(|\lambda_j|-2\big)^{\gamma}\le d_\gamma \sum_{n\in\Z}\tr \left(|B(n)|^{\gamma+\frac12}\right)
\label{eq:hsfree} 
\end{align}
with the constant
\begin{align*}
d_\gamma=\frac12 \frac{\Gamma(\gamma+1)\Gamma(2)}{\Gamma(\gamma+\frac32)\Gamma(\frac32)}
=\frac12 \frac{\B(\gamma-\frac12,2)}{\B(\gamma-\frac12,\frac32)}=2L^{\cl}_{\gamma,1}
\end{align*}
holds for all $\gamma\ge\frac12$.
Here, $L^{\cl}_{\gamma,1}$ is the semi-classical Lieb--Thirring constant. From \eqref{eq:compare} we can conclude that
\begin{align}
\frac{1}{d_\gamma}\sum_{j} m_j\big(|\lambda_j|-2\big)^{\gamma}
\le\frac{1}{\B(\gamma-\frac12,2)}\sum_{j} m_jG_{\gamma}(|\lambda_j|)
\le\sum_{n\in\Z}\tr\left(|B(n)|^{\gamma+\frac12}\right)
\label{eq:better1} 
\end{align}
which shows that at least in the case $A\equiv-\id$, inequality \eqref{eq:orderalpha} is indeed better than \eqref{eq:hsfree}. From \eqref{eq:asymp2} we can conclude that for eigenvalues $|\lambda_j|\to2$ approaching  the essential spectrum the two inequalities have the same asymptotic behaviour. 
In their paper \cite{Hundertmark2002}, Hundertmark and Simon also proved that for $\gamma\ge\frac12$
\begin{align}
\sum_{j}m_j (|\lambda_j|-2)^{\gamma+\frac12}\le\sum_{n\in\Z}\tr\left(|B(n)|^{\gamma+\frac12}\right)
\label{eq:hsfree2} 
\end{align}
which was found to be optimal for large coupling. Using \eqref{eq:compare2} we obtain that
\begin{align}
\sum_{j}m_j (|\lambda_j|-2)^{\gamma+\frac12}
\le\frac{1}{\B(\gamma-\frac12,2)}\sum_{j}m_jG_{\gamma}(|\lambda_j|)
\le\sum_{n\in\Z}\tr\left(|B(n)|^{\gamma+\frac12}\right)
\label{eq:better2} 
\end{align}
which leads to the conclusion that \eqref{eq:orderalpha} is also more precise than \eqref{eq:hsfree2}. From \eqref{eq:asympinfty} it follows that in the limit $|\lambda_j|\to\infty$ the two inequalities show the same asymptotic behaviour.

For choices of $\gamma$ which are not half-integers, the integral in the definition of $G_\gamma(\lambda)$ can unfortunately not be computed explicitly. We are  particularly interested in the case $\gamma=1$. 
We can at least numerically compute $G_{1}(\lambda)$ for different values of $\lambda$.  In Figure \ref{fig:plot12}  we compare the two ratios
\begin{align*}
R_1(\lambda)=\left(\frac{G_1(\lambda)}{\B(\frac12,2)}\right)\Big/\left(\frac{\lambda-2}{d_1}\right),
&&
R_2(\lambda)=\left(\frac{G_1(\lambda)}{\B(\frac12,2)}\right)\Big/(\lambda-2)^\frac32\,.
\end{align*}
From \eqref{eq:better1}, \eqref{eq:better2} it follows that $R_1(\lambda)\ge1$ and $R_2(\lambda)\ge1$ which is clearly visualised in the plot. The figure also indicates the asymptotic behaviours \eqref{eq:asymp2} and \eqref{eq:asympinfty}, which imply that $\lim_{\lambda\to2}R_1(\lambda)=\lim_{\lambda\to\infty}R_2(\lambda)=1$. 

\begin{figure}[!ht]
\begin{center}
\includegraphics[scale=0.42]{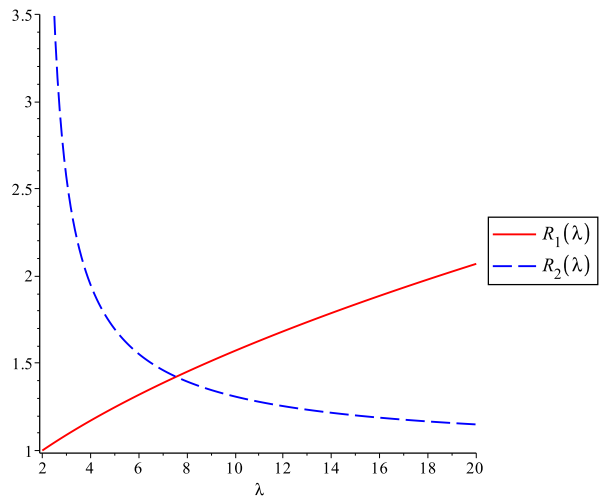}
\caption{Comparison of the governing terms in the spectral inequalities}
\label{fig:plot12}
\end{center}
\end{figure}


\section{Approximation of a Schr{\"o}dinger operator on the continuum by discrete Schr{\"o}dinger operators}
\label{sec:approx1}

In this section, we will use the established inequalities to prove analogous results for a Schr{\"o}dinger operator $-\frac{\dd^2}{\dd x^2}+V(x)$ on $L^2(\R)$. To approximate this operator by discrete Schr{\"o}dinger operators, we follow \cite{Wolff2001} where the following theorem is proven. 

\begin{theorem}{\cite[Theorem 2.3]{Wolff2001}}\label{th:approxdiscr}
Let $(E,\norm{\cdot})$ be a Banach Space and let $E_1$ be a dense linear subspace. Let $(F_k,\norm[k]{\cdot})$ be a sequence of Banach spaces such that for every $k\in\N$ there exists a (not necessarily bounded) linear operator $P_k:E_1\to F_k$ such that $\lim_{k\to\infty}\norm[k]{P_kf}=\norm{f}$ for all $f\in E_1$. Consider a densely defined closed operator $\big(H,\dom(H)\big)$ on $E$ with domain $\dom(H)$ and let $E_0\subset E_1$ be a core of $H$ such that $H(E_0)\subset E_1$. For each $k\in\N$ let $\big(H_k,\dom(H_k)\big)$ be a densely defined operator on $F_k$ such that $P_k(E_1)\subset \dom(H_k)$ and $\lim_{k\to\infty}\norm[k]{H_kP_kf-P_kHf}=0$ for all $f\in E_0$.  
If there exists an $M>0$ such that
\begin{align}
\norm{(H_k-\lambda)^{-1}}\le \frac{M}{\mathrm{dist}\big(\lambda,\sigma(H_k)\big)}
\label{eq:Mcond} 
\end{align} 
for all $\lambda\in\rho(H_k)$ and all $k\in\N$, then for any compact set $K\subset\C$
\begin{align*}
\lim_{k\to\infty} \mathrm{dist}\big(\sigma_a(H)\cap K,\sigma(H_k)\big)=0\,.
\end{align*}

\end{theorem}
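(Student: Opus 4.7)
The plan is to construct, for each $\lambda\in\sigma_a(H)\cap K$, an approximate eigenvector of $H_k$ at $\lambda$ by pushing a suitable approximate eigenvector of $H$ through the discretisation map $P_k$, and then to invoke the uniform resolvent bound \eqref{eq:Mcond} to force $\lambda$ to be close to $\sigma(H_k)$. A finite covering argument over the compact set $\sigma_a(H)\cap K$ will then yield the uniform statement.

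Fix $\lambda\in\sigma_a(H)\cap K$ and $\delta>0$. By definition of the approximate point spectrum there is an $f\in\dom(H)$ with $\norm{f}=1$ and $\norm{(H-\lambda)f}<\delta$. Because $E_0$ is a core for the closed operator $H$, one can approximate $f$ in the graph norm by an element of $E_0$; using that $|\lambda|$ is bounded on $K$, one can arrange $g\in E_0$ with $\norm{g}\ge 1/2$ and $\norm{(H-\lambda)g}\le 2\delta$. By the hypothesis $H(E_0)\subset E_1$ both $g$ and $(H-\lambda)g$ lie in $E_1$, so that $\norm[k]{P_k g}\to\norm{g}$ and $\norm[k]{P_k(H-\lambda)g}\to\norm{(H-\lambda)g}$ as $k\to\infty$.

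Next, I would decompose
\[
(H_k-\lambda)P_k g \;=\; \bigl(H_kP_k g-P_kHg\bigr)+P_k(H-\lambda)g.
\]
The first summand tends to $0$ in $F_k$ by the discretisation hypothesis $\norm[k]{H_kP_k h-P_kHh}\to 0$ for $h\in E_0$, and the norm of the second summand converges to $\norm{(H-\lambda)g}\le 2\delta$. Hence for $k$ sufficiently large (depending on $\delta$ and $g$),
\[
\norm[k]{(H_k-\lambda)P_k g}\le 3\delta \quad\text{and}\quad \norm[k]{P_k g}\ge\tfrac14.
\]
If $\lambda\notin\sigma(H_k)$, the resolvent bound \eqref{eq:Mcond} then yields
\[
\tfrac14\le\norm[k]{P_k g}\le\norm{(H_k-\lambda)^{-1}}\,\norm[k]{(H_k-\lambda)P_k g}\le\frac{3M\delta}{\mathrm{dist}\bigl(\lambda,\sigma(H_k)\bigr)},
\]
i.e.\ $\mathrm{dist}(\lambda,\sigma(H_k))\le 12M\delta$. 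As $\delta>0$ is arbitrary, this proves $\mathrm{dist}(\lambda,\sigma(H_k))\to 0$ for each fixed $\lambda\in\sigma_a(H)\cap K$.

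To upgrade this pointwise convergence to the uniform statement, I would use the compactness of $\sigma_a(H)\cap K$: for any $\eps>0$ cover this set by finitely many balls $B(\lambda_j,\eps/2)$, $j=1,\dots,N$, with centres in $\sigma_a(H)\cap K$, apply the previous step to each $\lambda_j$ to obtain a single $k_0$ with $\mathrm{dist}(\lambda_j,\sigma(H_k))<\eps/2$ for every $j$ and $k\ge k_0$, and conclude by the triangle inequality. The main obstacle is the coupling between the \emph{core} approximation and the discretisation: a naive approximate eigenvector produced by the definition of $\sigma_a(H)$ need not lie in $E_1$, nor need its image under $H-\lambda$, so one first has to replace it by an element of $E_0$ (controlling both $\norm{g}$ and $\norm{(H-\lambda)g}$ via the graph norm), and only then transport through $P_k$. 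A secondary point is the interlocking of the limits $\delta\to 0$ and $k\to\infty$, which is handled by fixing $\delta$, selecting $g$, and only then letting $k\to\infty$.
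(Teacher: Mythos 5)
The paper does not prove this statement; it is quoted verbatim from \cite[Theorem 2.3]{Wolff2001}, so there is no in-text argument to compare yours against. Your proof is correct and is the standard Weyl-sequence argument one would expect: replace a raw approximate eigenvector by a core element $g\in E_0$ (using the graph-norm approximation, with $|\lambda|$ bounded on $K$), transport it through $P_k$, split $(H_k-\lambda)P_kg=(H_kP_kg-P_kHg)+P_k(H-\lambda)g$, let the discretisation hypothesis kill the first term and the consistency of the norms control the second, invoke the uniform resolvent bound \eqref{eq:Mcond} to get $\mathrm{dist}(\lambda,\sigma(H_k))\le 12M\delta$ (including the trivial case $\lambda\in\sigma(H_k)$), and upgrade the pointwise statement to a uniform one by a finite $\varepsilon/2$-net of $\sigma_a(H)\cap K$, which is indeed compact since $\sigma_a(H)$ is closed. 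You correctly flagged and resolved the two genuine bookkeeping points, namely that the approximate eigenvector must be pushed into $E_0$ so that both it and its image under $H-\lambda$ lie in $E_1$ where $P_k$ acts, and that $\delta$ and $g$ must be fixed before the limit $k\to\infty$ is taken.
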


Here, the distance between a bounded set $X\subset\C$ and a nonempty set $Y\subset\C$ is defined by
\begin{align*}
\mathrm{dist}\big(X,Y\big)=\sup_{x\in X}\inf_{y\in Y}|x-y|
\end{align*}
and $\sigma_a(H)$ denotes the approximate point spectrum of $H$. 

In our case $E=L^2(\R)$ and $H=-\frac{\dd^2}{\dd x^2}+V(x)$. For the moment we shall assume that the potential is smooth and compactly supported, $V\in\mathcal{C}_c^\infty(\R)$. The operator $H$  is self-adjoint on the domain $\dom(H)=H^2(\R)$, the Sobolev space of functions in $L^2(\R)$ that are twice weakly differentiable. We choose $E_1=E_0=\mathcal{C}^\infty_c(\R)$. 
which is well-known to form a core of $H$ and clearly also $H(E_0)\subset E_1$.  The Banach spaces $F_k$ are chosen to be $\ell^2(\Z)$ with the scalar product
\begin{align*}
(c,d)_k=\frac{1}{k}\sum_{n\in\Z} c(n) \overline{d(n)}\,.
\end{align*}
and the operator $P_k:E_1\to F_k$ is defined via
\begin{align*}
(P_kf)(n)=f\left(\frac nk\right),\, n\in\Z
\end{align*}
for all $f\in E_1$. Note that this is well-defined as the compact support of $f$ guarantees that $\norm[k]{P_kf}$ is given by a finite sum. Furthermore, by the definition of Riemann integrability, it holds that
\begin{align*}
\lim_{k\to\infty}\norm[k]{P_kf}^2=\lim_{k\to\infty}\frac1k\sum_{n\in\Z}\left|f\left(\frac nk\right)\right|^2
=\int_{\R}|f(x)|^2\dd x=\norm[L^2]{f}^2\,.
\end{align*} 
The operator $H_k$ is now defined on $F_k$ as
\begin{align*}
(H_ku)(n)=k^2\big(-u(n+1)+2u(n)-u(n-1)\big)+V\left(\frac nk\right)u(n)\,.
\end{align*}
Due to the compact support of the potential, $H_k$ is a bounded, self-adjoint operator on $F_k$. Thus the operator is in particular normal and hence \eqref{eq:Mcond} holds with $M=1$. It remains to show that $\lim_{k\to\infty}\norm[k]{H_kP_kf-P_kHf}=0$ for all $f\in E_0$. Let $N\in\N$ be sufficiently big such that $\mathrm{supp}(f)\subset [-N,N]$. We observe that
\begin{align*}
\norm[k]{H_kP_kf-P_kHf}^2=\frac1k\sum_{n=-Nk-1}^{Nk+1}
\left|k^2\left(
-f\left(\frac{n+1}{k}\right)+2f\left(\frac n k\right)-f\left(\frac{n-1}{k}\right)
\right)+f''\left(\frac nk\right)\right|^2\,.
\end{align*}
As $f\in\mathcal{C}_c^\infty(\R)$, Taylor's theorem shows that there exists a constant $C>0$ depending only on $f$ such that every summand in the above sum is smaller than $C \frac{1}{k^4}$. This proves that
\begin{align*}
\norm[k]{H_kP_kf-P_kHf}^2\le \frac{1}{k^5}(2Nk+3)
\end{align*}
where the right-hand-side converges to zero as $k\to\infty$. Thus all the assumptions of Theorem \ref{th:approxdiscr} are satisfied. 

Denote by $\mu_1<\dots<\mu_N<0$  the finite sequence of negative eigenvalues of $H$ in increasing order. Choosing $K=[\mu_1,\mu_N]\subset\R$,  Theorem  \ref{th:approxdiscr} states that 
\begin{align}
\lim_{k\to\infty}\max_{j=1,\dots,N}\inf_{\lambda\in \sigma(H_k)}|\mu_j-\lambda|=0\,.
\label{eq:approxdist} 
\end{align}
Note that $W_k=k^{-2}H_k-2$ is a discrete Schr{\"o}dinger operator of the form \eqref{eq:defH} with $a_k\equiv-1$ and scalar potential $b_k(n)=k^{-2}V(n/k)$. Its essential spectrum is given by $[-2,2]$ and it has a finite sequence of eigenvalues outside of this interval. These eigenvalues satisfy the spectral inequalities discussed in the previous section, in particular \eqref{eq:order1+} and \eqref{eq:orderalpha}. 
If we denote by $\lambda_{k,j}$ the discrete eigenvalues of $H_k$ the above observations let us conclude that each $\lambda_{k,j}$ is in the complement of the interval $k^2[0,4]$ and that
\begin{align*}
\sum_{j}\big((k^{-2}\lambda_{k,j}-2)^2-4\big)^{\frac12}\le
\frac{1}{k^2}\sum_{n\in\Z}\left|V\left(\frac nk\right)\right|
\end{align*}
as well as 
\begin{align*}
\sum_{j}G_{\gamma}(|k^{-2}\lambda_{k,j}-2|)
\le \B\left(\gamma-\frac12,2\right)\frac{1}{k^{2\gamma+1}}\sum_{n\in\Z}\left|V\left(\frac nk\right)\right|^{\gamma+\frac12}
\end{align*}
for all $\gamma>\frac12$.
If we sum only over all the negative eigenvalues $\lambda_{k,j}^-$ of $H_k$ then 
\begin{align}
\sum_{j}\big((k^{-2}\lambda_{k,j}^--2)^2-4\big)^{\frac12}\le
\frac{1}{k^2}\sum_{n\in\Z}V\left(\frac nk\right)_-
\label{eq:approx12} 
\end{align}
and for $\gamma>\frac12$
\begin{align}
\sum_{j}G_{\gamma}(|k^{-2}\lambda_{k,j}^--2|)
\le \B\left(\gamma-\frac12,2\right)\frac{1}{k^{2\gamma+1}}\sum_{n\in\Z}V\left(\frac nk\right)_-^{\gamma+\frac12}\,.
\label{eq:approxp} 
\end{align}
A simple calculation for negative $\lambda<0$ shows that
\begin{align*}
\big((k^{-2}\lambda-2)^2-4\big)^\frac12=k^{-1}(k^{-2}\lambda^2-4\lambda)^{\frac12} 
\end{align*}
and consequently inequality \eqref{eq:approx12} can be written as
\begin{align*}
\sum_{j}\big(k^{-2}(\lambda_{k,j}^-)^2-4\lambda_{k,j}\big)^{\frac12} \le
\frac{1}{k}\sum_{n\in\Z}V\left(\frac nk\right)_-\,.
\end{align*}
From \eqref{eq:approxdist} we conclude that the negative eigenvalues $\lambda_{k,j}^-$ approximate the eigenvalues $\mu_1,\dots,\mu_N$ as $k\to\infty$. Thus, taking the limit $k\to\infty$ in the above spectral inequality and using the definition of Riemann integrability we arrive at
\begin{align*}
\sum_{j}|\mu_j|^\frac12\le
\frac{1}{2}\int_{\R}V(x)_-\dd x
\end{align*}
which is the best possible bound. Using a further approximation argument, the assumption $V\in\mathcal{C}_c^\infty(\R)$ can be generalised to the weaker $V\in L^1(\R)$.

Using the substitution $s=k^2(E-2)/|\lambda|$ it is straightforward to show that for $\gamma>\frac12$ and $\lambda<0$ we can write 
\begin{align*}
G_{\gamma}(|k^{-2}\lambda-2|)
=G_\gamma(2+k^{-2}|\lambda|)
=\frac{1}{k^{2\gamma}}|\lambda|^\gamma\int_0^1s^\frac12(1-s)^{\gamma-\frac32}(4+sk^{-2}|\lambda|)^\frac12\dd s\,.
\end{align*}
Together with \eqref{eq:approxp}  this allows us to conclude that 
\begin{align*}
\sum_{j}|\lambda_{k,j}^-|^\gamma\int_0^1s^\frac12(1-s)^{\gamma-\frac32}(4+sk^{-2}|\lambda_{k,j}^-|)^\frac12\dd s
\le \B\left(\gamma-\frac12,2\right)\frac{1}{k}\sum_{n\in\Z}V\left(\frac nk\right)_-^{\gamma+\frac12}\,.
\end{align*}
In the limit $k\to\infty$ this inequality yields 
\begin{align*}
\sum_{j}|\mu_j|^\gamma\le\frac{\B(\gamma-\frac12,2)}{2\B(\gamma-\frac12,\frac32)}
\int_{\R}V(x)_-^{\gamma+\frac12}\dd x
=2L^{\cl}_{\gamma,1}\int_{\R}V(x)_-^{\gamma+\frac12}\dd x
\end{align*}
which is not the sharp result. Note that using \eqref{eq:hs1} instead of \eqref{eq:orderalpha} leads to the same result, as the difference between thews two inequalities vanishes for small eigenvalues. As mentioned before, for $\gamma\ge1$ the constant $d_\gamma$ can be improved to $\frac{\pi}{\sqrt{3}}L^\cl_{\gamma,1}$ \cite{Sahovic2010}. This yields the result of \cite{Eden1991} on $L^2(\R)$. In order to get a sharp result, we will use a different approximation of $H$ in the next section. 


\section{Approximation of a Schr{\"o}dinger operator on the continuum by Jacobi operators}
\label{sec:approx2}
One could expect better results by approximating the operator on the continuum by a discrete operator whose off-diagonal entries $a_k(n)$ are not restricted to be $-1$.  
Let the vector spaces $E_0,E_1,F_k$ and the linear operators $P_k$ be defined as in the previous section. Take $c\in[0,1]$ and set $d=(1-c)/2$ such that $c+2d=1$. We now define the operator $H_k$ on $F_k$ as
\begin{align*}
(H_ku)(n)
=&k^2\left(-u(n-1)+2u(n)-u(n+1)\right)\\
&+c V\left(\frac nk\right)u(n)+dV\left(\frac{n-1}k\right)u(n-1)+dV\left(\frac nk\right)u(n+1)\,.
\end{align*}
Suppose that $f\in E_0$ and that $N\in\N$ is sufficiently large such that $[-N,N]$ covers both the support of $V$ and the support of $f$. Then it holds that
\begin{align*}
&\norm[k]{H_kP_kf-P_kHf}^2\le\frac1k\sum_{n=-Nk+1}^{Nk+1}
\Bigg(\left|k^2\left(
-f\left(\frac{n+1}{k}\right)+2f\left(\frac n k\right)-f\left(\frac{n-1}{k}\right)
\right)+f''\left(\frac nk\right)
\right|\\
&+\left|c V\left(\frac nk\right)f\left(\frac{n}{k}\right)+dV\left(\frac{n-1}k\right)f\left(\frac{n-1}{k}\right)+dV\left(\frac nk\right)f\left(\frac{n+1}{k}\right)
-V\left(\frac nk\right)f\left(\frac nk\right)\right|\Bigg)^2.
\end{align*}
As discussed previously, Taylor's theorem and the boundedness of $f$ yield the existence of a constant $C_1$ such that
\begin{align}
\left|k^2\left(
-f\left(\frac{n+1}{k}\right)+2f\left(\frac n k\right)-f\left(\frac{n-1}{k}\right)
\right)+f''\left(\frac nk\right)
\right|\le C_1\frac{1}{k^2}\,.
\label{eq:approxest1} 
\end{align}
To find a similar bound for the second sum, we first note that by Taylor's theorem there exists $\xi\in (n/k-1/k,n/k)$ such that
\begin{align*}
V\left(\frac{n-1}k\right)=V\left(\frac nk\right)+\frac1kV'(\xi)\,.
\end{align*}
This lets us conclude that
\begin{align}
\begin{split}
&\left|c V\left(\frac nk\right)f\left(\frac{n}{k}\right)+dV\left(\frac{n-1}k\right)f\left(\frac{n-1}{k}\right)+dV\left(\frac nk\right)f\left(\frac{n+1}{k}\right)
-V\left(\frac nk\right)f\left(\frac nk\right)\right|\\
&\le\left| V\left(\frac nk\right)\right|\left|cf\left(\frac{n}{k}\right)+df\left(\frac{n-1}{k}\right)+df\left(\frac{n+1}{k}\right)
-f\left(\frac nk\right)\right|
+\frac{1}{k}\left|V'(\xi)f\left(\frac{n-1}{k}\right)\right|,
\end{split}
\label{eq:approxest2} 
\end{align}
where the last term can be bounded by $C_2/k$ with the constant $C_2>0$ depending only on $V$ and $f$ since $V,f \in\mathcal{C}_c^\infty(\R)$. 
By Taylor's theorem there also exist $\xi_-\in(n/k-1/k,n/k)$ and $\xi_+\in(n/k,n/k+1/k)$ such that 
\begin{align*}
f\left(\frac{n\pm1}k\right)=f\left(\frac nk\right)\pm\frac1kf'(\xi_\pm)\,.
\end{align*}
This allows us to continue \eqref{eq:approxest2} as
\begin{align*}
&\left|c V\left(\frac nk\right)f\left(\frac{n}{k}\right)+dV\left(\frac{n-1}k\right)f\left(\frac{n-1}{k}\right)+dV\left(\frac nk\right)f\left(\frac{n+1}{k}\right)
-V\left(\frac nk\right)f\left(\frac nk\right)\right|\\
&\le\left| V\left(\frac nk\right)\right|\frac{d}{k}\left|f'(\xi_+)-f'(\xi)_-\right|+C_2\frac1k\\
&\le C_3\frac1k
\end{align*}
where $C_3>0$ is a constant depending only on $V$ and $f$. Together with \eqref{eq:approxest1} we have shown that
\begin{align*}
\norm[k]{H_kP_kf-P_kHf}^2\le\frac{1}{k}(2Nk+3)\left(C_1\frac{1}{k^2}+C_3\frac1k\right)^2
\end{align*} 
where the right-hand-side clearly converges to zero as $k\to\infty$. Thus all the assumptions of Theorem \ref{th:approxdiscr} are satisfied. 

The operator $W_k=k^{-2}H_k-2$ is of the form \eqref{eq:defH} with potential
$b_k(n)=ck^{-2}V(n/k)$ and off-diagonal terms $a_k(n)=-1+dk^{-2}V(n/k)$. Note that $a_k(n)<0$ if only $k$ is sufficiently large.
Let $\lambda_{k,j}^-$ be the negative eigenvalues of $H_k$. Inequality \eqref{eq:final} now holds for the eigenvalues of $W_k$ and in particular if we only consider the negative eigenvalues. Recalling \eqref{eq:linkG32} this can be written as 
\begin{align*}
2\sum_{j}G_{\frac32}(|k^{-2}\lambda_{k,j}^--2|)\le\sum_{n\in\Z}b_k(n)^2+2\sum_{n\in\Z} \left(a_k(n)^2-1-\log a_k(n)^2\right)\,.
\end{align*}
We now apply \eqref{eq:compare} and insert the definitions of $a_k(n)$ and $b_k(n)$ to conclude that \begin{align*}
&4\B\left(1,\frac32\right)
\sum_j|\lambda_{k,j}^-|^{\frac32}\\\le &\frac{c^2}{k} \sum_{n\in\Z}V\left(\frac nk\right)^2
+2\sum_{n\in\Z}\frac{d^2}{k} V\left(\frac nk\right)^2-2dkV\left(\frac nk\right)
-k^3\log\left(1-\frac{2d}{k^2}V\left(\frac nk\right)+\frac{d^2}{k^4}V\left(\frac nk\right)^2\right).
\end{align*}
Using the Taylor expansion $\log(1+x)=x-x^2/2+\Oo(x^3)$ for small $|x|$ and \eqref{eq:approxdist} as well as the Riemann integrability of $V$, we can let $k\to\infty$ to obtain 
\begin{align*}
\sum_j|\mu_{j}|^{\frac32}\le \frac{1}{4B(1,\frac32)}(c^2+4d^2)\int_{\R}V(x)^{2}\dd x= 2L_{\frac32,1}^{\cl}(c^2+4d^2)\int_{\R}V(x)^{2}\dd x\,.
\end{align*}
We can minimise the right-hand-side with respect to $c\in[0,1]$ and it is straightforward to prove that the minimum is attained at $c=\frac12$. This yields  
\begin{align*}
\sum_j|\mu_{j}|^{\frac32}\le L_{\frac32,1}^{\cl}\int_{\R}V(x)^{2}\dd x
\end{align*}
which  by continuity can be shown to hold under the more general assumption $V\in L^2(\R)$. 
Finally, the Min-Max principle implies that $V$ on the right-hand-side of the inequality can be replaced by $V_-$, yielding the sharp Lieb--Thirring inequality 
\begin{align*}
\sum_j|\mu_{j}|^{\frac32}\le L_{\frac32,1}^{\cl}\int_{\R}V(x)_-^{2}\dd x\,.
\end{align*}
The Aizenman-Lieb principle extends this result immediately to higher powers $\gamma\ge\frac32$.

We can also try to obtain better bounds for $\frac12<\gamma<\frac32$ using this approximation and \eqref{eq:hs1}.
If we sum only over all the negative eigenvalues in \eqref{eq:hs1} then the constant 4 can be replaced by 2 and this yields that for all $\gamma\ge\frac12$
\begin{align*}
\sum_j|\lambda_{k,j}^-|^{\gamma}
&\le c_\gamma\left(\frac{d^{\gamma+\frac12}}{k}\sum_{n\in\Z}\left|V\left(\frac nk\right)\right|^{\gamma+\frac12}+\frac{2d^{\gamma+\frac12}}{k}\sum_{n\in\Z}\left|V\left(\frac nk\right)\right|^{\gamma+\frac12}\right)\\
&=c_\gamma\left(c^{\gamma+\frac12}+2d^{\gamma+\frac12}\right)
\frac{1}{k}\sum_{n\in\Z}\left|V\left(\frac nk\right)\right|^{\gamma+\frac12}.
\end{align*}
Letting $k\to\infty$ and using \eqref{eq:approxdist}, we arrive at
\begin{align*}
\sum_j|\mu_{j}|^{\gamma}\le 2L_{\gamma,1}^{\cl}3^{\gamma-\frac12}\left(c^{\gamma+\frac12}+2d^{\gamma+\frac12}\right)\int_{\R}|V(x)|^{\gamma+\frac12}\dd x\,.
\end{align*}
We can now minimise the function $c^{\gamma+\frac12}+2d^{\gamma+\frac12}$ on the right-hand-side with respect to $c\in[0,1]$. If $\gamma=\frac12$ we observe that the function is constant $1$ and consequently the minimum is attained at any $c\in[0,1]$. Thus, no matter which approximation we choose exactly, we will always get the sharp Lieb--Thirring constant $2L_{1/2,1}^{\cl}$. For $\gamma>\frac12$ we find that the minimum is attained at $c=\frac13$. Consequently we obtain that 
\begin{align*}
\sum_j|\mu_{j}|^{\gamma}\le 2L_{\gamma,1}^{\cl}\int_{\R}|V(x)|^{\gamma+\frac12}\dd x
\end{align*}
which is unfortunately exactly the same result as in the previous section. This is not surprising, as Hundertmark and Simon derived the inequalities \eqref{eq:hs1} from the result for constant diagonal terms using a perturbation argument. To obtain the best possible Lieb--Thirring constants for $\frac12<\gamma<\frac32$ from our approximation method, it would first be necessary to establish the correct terms depending on $a(n)$ on the right-hand-side of \eqref{eq:orderalpha} such that the inequality becomes sharp or to further improve inequality \eqref{eq:hs1} in the special case of a discrete Schr{\"o}dinger operator.

\section*{Acknowledgements}
The author is very thankful to his supervisor Ari Laptev for providing the initial idea behind this work and for his engagement in stimulating discussions. The author would also like to thank Alexander Veselov for help with the references.
Furthermore, funding through the Roth Studentship from the Department of Mathematics at Imperial College London is gratefully acknowledged.
\bibliographystyle{amsplain}
\bibliography{biblio_discretebl}
\end{document}